\newcommand{\inputtikz}[1]{\includegraphics{#1}}
\newcommand{\sN}{\mathbb{N}}
\newcommand{\sZ}{\mathbb{Z}}
\newcommand{\sR}{\mathbb{R}}
\newcommand{\Idk}{I^{d'}_k}
\newcommand{\Idmk}{I^{d'-1}_{2k}}
\newcommand{\Idmkp}{I^{d'-1}_{2k+1}}
\newcommand{\spmod}{\!\mod}
\newcommand{\eqnlab}[1]{\label{eqn:#1}}
\newcommand{\eqnref}[1]{\eqref{eqn:#1}}
\newcommand{\figlab}[1]{\label{fig:#1}}
\newcommand{\figref}[1]{Figure~\ref{fig:#1}}
\newcommand{\seclab}[1]{\label{sec:#1}}
\newcommand{\secref}[1]{Section~\ref{sec:#1}}
\newcommand{\thmlab}[1]{\label{thm:#1}}
\newcommand{\thmref}[1]{Theorem~\ref{thm:#1}}
\newcommand{\alglab}[1]{\label{alg:#1}}
\newcommand{\algref}[1]{Algorithm~\ref{alg:#1}}
\newcommand{\mytabvspace}{\vphantom{${X^X}^X$}} 
\DeclareMathOperator{\type}{type}
\newtheorem{thm}{Theorem}
\newtheorem{prop}[thm]{Proposition}
\newtheorem{cor}[thm]{Corollary}
\newtheorem{dfn}[thm]{Definition}
\newtheorem{lem}[thm]{Lemma}
\newtheorem{conj}[thm]{Conjecture}
\newtheorem{rem}[thm]{Remark}
\author{Carsten Burstedde\footnote{%
          Institut f\"ur Numerische Simulation (INS)
          and Hausdorff Center for Mathematics (HCM),
          Universit\"at Bonn, Germany,
          \href{mailto:burstedde@ins.uni-bonn.de}{burstedde@ins.uni-bonn.de}
          (corresponding author)%
        },
        Johannes Holke\footnote{%
          Institut f\"ur Numerische Simulation (INS),
          Hausdorff Center for Mathematics (HCM),
          and Bonn International Graduate School for Mathematics (BIGS),
          Universit\"at Bonn, Germany%
        },
        Tobin Isaac\footnote{%
          Computing Institute, The University of Chicago, USA
        }
      }
\title{Bounds on the number of discontinuities\\
       of Morton-type space-filling curves}
\begin{document}

\maketitle

\begin{abstract}
The Morton- or $z$-curve is one example for a space filling curve: Given a
level of refinement $L \in \sN_0$, it maps the interval $[0, 2^{dL}) \cap \sZ$
one-to-one to a set of $d$-dimensional cubes of edge length $2^{-L}$ that form
a
subdivision of the unit cube.
Similar curves have been proposed for triangular and tetrahedral unit domains.
In contrast to the Hilbert curve that is continuous, the Morton-type curves
produce jumps.

We prove that any contiguous subinterval of the curve divides the domain into a
bounded number of face-connected subdomains.
For the hypercube case in arbitrary dimension, the subdomains are star-shaped
and the bound is indeed two.
For the simplicial case in dimensions 2 and 3, the bound is proportional to the
depth of refinement $L$.
We supplement the paper with theoretical and computational studies on the
frequency of jumps for a quantitative assessment.
%
\end{abstract}

\section{Introduction}
\seclab{intro}

\begin{figure}
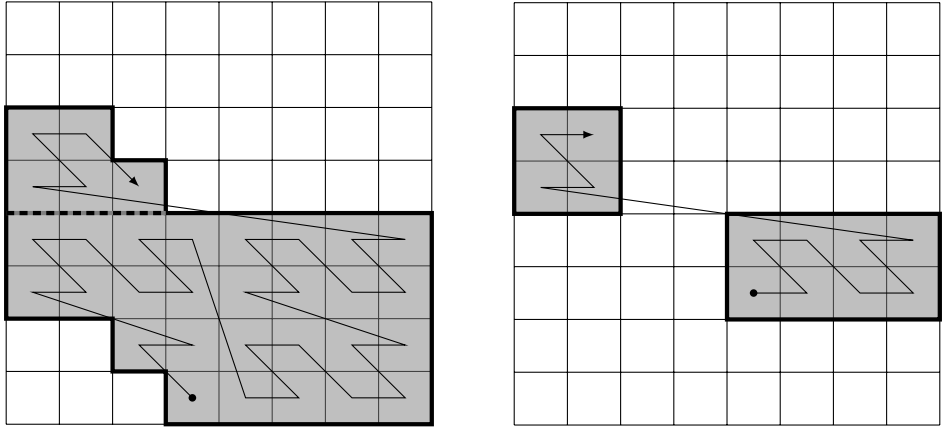
\centering
  \begin{minipage}{0.49\textwidth}\centering
    \inputtikz{morton1}
  \end{minipage}
  \begin{minipage}{0.49\textwidth}\centering
    \inputtikz{morton2}
  \end{minipage}
  \caption{Contiguous subsections of the Morton curve at refinement
           level $L = 3$.  Observe the jumps when the $z$-curve
           runs diagonally.  In the right hand image, this
           produces two disconnected subdomains.
           In both pictures shown, the domain decomposes into two star-shaped
           pieces.}
  \figlab{morton}
\end{figure}
The Peano curve \cite{Peano90} and the Hilbert curve \cite{Hilbert91} are
continuous maps from the line onto the $d$-dimensional unit cube.  A large
number of such space filling curves (SFC) has been described in the literature; see
for example \cite{Sagan94, Bader12, HaverkortWalderveen10} and the references
therein.  They are usually defined in terms of a recursive prescription.  For
numerical applications, the curve is made discrete and finite by bounding the
depth of the recursion.  The smallest units of space that are traversed may be
called elements.  The Morton- or $z$-curve, originally described by Lebesgue
\cite{Lebesgue04} and adapted to data storage by Morton \cite{Morton66}, also
creates such a map, but it is not continuous.  In fact, it contains jumps
throughout its length (see \figref{morton}).  This raises the concern that a
subsection of the curve may divide the space covered by its image into a large
number of disconnected subdomains.  Especially when the curve is used to divide
a computational mesh between different processors for parallel computation, see
e.g.\ \cite{GriebelZumbusch99, AkcelikBielakBirosEtAl03,
BursteddeGhattasGurnisEtAl10, AhimianLashukVeerapaneniEtAl10,
WeinzierlMehl11},
the surface-to-volume ratio of a fragmented subdomain could grow without
bounds, which would likely increase the amount of data to be communicated.

In this paper we eliminate that concern by proving that the (classic cubical)
Morton curve can lead to no more than two subdomains, where we define a set of
elements to be of the same subdomain if they are connected by a finite number
of element face connections:
\begin{thm}
  \label{illthmallcube}
  A contiguous segment of a Morton curve through a uniform or adaptive tree
  of maximum refinement level $L$ produces at most two distinct face-connected
  subdomains.
  This result is independent of the space dimension.
\end{thm}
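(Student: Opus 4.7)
My plan is to prove the bound by induction on the refinement level $L$, bootstrapping from two auxiliary results: (i) any prefix or suffix of the Morton curve forms a single face-connected subdomain, and (ii) for any contiguous integer range $[c_s, c_e] \subseteq [0, 2^d - 1]$, the subset of the child hypercube $\{0,1\}^d$ corresponding to these indices has at most two face-connected components, with $c_s$ and $c_e$ lying in distinct ones whenever two exist. The base case $L = 0$ is trivial.

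I would prove (i) first. At level 1, the key observation is a bit-flip fact: for any Morton index $i \geq 1$, the position $\vec{i} \in \{0,1\}^d$ has at least one set bit, and flipping any set bit yields a face-adjacent position of strictly smaller Morton index, hence already in the prefix $[0, i-1]$. This makes every level-1 prefix face-connected, and the reverse argument handles suffixes. Lifting to general $L$ is a straightforward induction: a prefix at level $L$ decomposes into some fully covered initial children plus an induced prefix within the next child, which is face-connected by induction and face-connects to an earlier full child via the bit-flip applied to the position of that next child in $\{0,1\}^d$. For (ii), I induct on $d$: if the range lies in one half of the hypercube, recurse in dimension $d - 1$; otherwise the range splits cleanly into a suffix of the low half and a prefix of the high half, each face-connected by (i) at level 1, with $c_s$ and $c_e$ necessarily in different halves.

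The inductive step for the main theorem decomposes the segment as $A \cup M \cup B$, where $A \subseteq c_s$ is a suffix of the first touched child, $B \subseteq c_e$ is a prefix of the last touched child, and $M$ is the collection of fully covered intermediate children (if the segment lies in a single child, recurse via the inductive hypothesis). By (i), $A$ and $B$ are each face-connected, and by (ii) applied to $[c_s, c_e]$ the top-level cubes split into at most two face-connected components $K_1 \ni c_s$ and $K_2 \ni c_e$. The crux is lifting this top-level structure to the element level: the max-corner element of $c_s$ always lies in $A$ and is element-face-adjacent to a specific element in each positive-direction top-level neighbor of $c_s$ that falls in $M$, and symmetrically for the min-corner of $c_e$ in $B$; full cubes in $M$ that are top-level-adjacent share entire $(d-1)$-faces. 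A top-level BFS from $c_s$ through $K_1$, lifted step by step via these adjacencies, shows that $A \cup (K_1 \cap M)$ is one face-connected subdomain; symmetrically $B \cup (K_2 \cap M)$ is one. When $K_1 \neq K_2$ these are the two subdomains; when $K_1 = K_2$ they merge through $M$, unless $M$ is empty, in which case $c_s$ and $c_e$ differ by a single bit and a direct check of whether the extreme elements of $A$ and $B$ happen to face-connect yields at most two components.

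The main obstacle I anticipate is this BFS lifting step: I must verify that the single extreme-corner element available in $A$ (or $B$) really does face-connect into each top-level-adjacent full cube in the same component. The key ingredient is that $c_s$'s in-segment top-level neighbors are exactly those reached in positive coordinate directions (negative-direction neighbors have Morton index below $c_s$, hence outside the segment), and the max-corner element of $c_s$ is the unique element of $c_s$ simultaneously touching all such positive-direction interfaces.
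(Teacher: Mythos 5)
Your plan is sound in the branch $K_1\neq K_2$, but the branch $K_1=K_2$ with $M\neq\emptyset$ contains a genuine gap, and the lifting claim you rely on there is false. Take $d=2$, $L=2$, and the segment consisting of the last element of child $1$, all of child $2$, and the first element of child $3$ (level-$2$ Morton indices $7$ through $12$). The index range $\{1,2,3\}$ is a single component of the top-level adjacency graph, because $1$ and $2$ are diagonal and are joined only through $3$; hence your BFS from $c_s=1$ to the full cube $2$ is forced to route through $c_e=3$, a child of which only the prefix $B$ is present, and nothing guarantees that the element of $c_e$ sitting on the face shared with the next full cube belongs to $B$. Concretely, $A=\{$element $7\}$ occupies $[3/4,1]\times[1/4,1/2]$ and does not even touch child $2$, so ``$A\cup(K_1\cap M)$ is one face-connected subdomain'' fails, and the conclusion that the two sides ``merge through $M$'' is also wrong: this segment has exactly two components, namely $\{7\}$ and $\{8,\dots,12\}$. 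The final bound of two happens to hold here, but your argument in this branch asserts connectivity statements that are false, so it does not establish it; the same obstruction arises whenever the only top-level path from $c_s$ to part of $M$ passes through $c_e$ (or vice versa).

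The cure is to abandon the split of the range into connected components and instead split it, as the paper does, at the most significant bit in which the indices of the range differ: the segment then becomes a suffix of one dyadic block of level-1 subtrees plus a prefix of the adjacent dyadic block, and each of these two pieces is a \emph{single} face-connected subvolume. That, however, requires a strengthening of your statement (i): not just prefixes/suffixes of the whole curve, but of an arbitrary dyadic block $\Idk$ of level-1 subtrees, proved by a joint induction over the block dimension $d'$ and the level $L$ --- this is exactly Proposition~\ref{arboneended}, and the two-piece bound is then Proposition~\ref{arbtwoended} (the non-inductive route via star-shapedness, \thmref{Y0} and \thmref{uniontwo}, performs the same bit-split). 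With that formulation the partial end cube and the full cubes on its side are handled together, and no lifting through the opposite partial cube is ever needed. Finally, note that the theorem also covers adaptive trees; your proposal does not address this, whereas the paper reduces it to the uniform case by refining every quadrant to level $L$, which leaves the face-connectivity of the covered region unchanged (\secref{main}).
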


We note that a proof for the two-dimensional square has been given in
\cite[pages 175--177]{Bader12} that proceeds by illustrating and enumerating a
finite number of cases.
(In fact, we adapt these ideas to dimensions two and three in
\secref{illustrated}, and also restate the extension to adaptive meshes in
\secref{main}.)
It is said in \cite{Bader12} that the construction extends to dimensions three
and higher.
This is entirely plausible, yet we see that the number of cases to discuss
grows with the space dimension and would eventually require some kind of
automation.
Thus, we proceed inductively over $d$ to provide dimension-independent
results.
%
We also supply a formal non-inductive proof to show that the
connected segments are star-shaped.

For the triangular and tetrahedral Morton curves introduced recently
\cite{BursteddeHolke16}, we show that the bound is proportional to the depth of
refinement $L$:
\begin{thm}
  \label{illthmalltets}
  A contiguous segment of a tetrahedral Morton curve through a uniform or
  adaptive tree of maximum refinement level $L\geq 2$ produces at most $2(L-1)$
  face-connected subdomains in 2D and at most $2L+1$ in 3D.
  For $L=1$ there are at most two face-connected subdomains.
\end{thm}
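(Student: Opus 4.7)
The plan is to induct on the refinement level $L$, mirroring the high-level structure of the argument for the cubical case \thmref{illthmallcube} but allowing for the additional fragmentation permitted by the simplicial subdivision rules introduced in \cite{BursteddeHolke16}. The claimed bound has the form $f(L) = 2(L-1)$ in 2D and $f(L) = 2L+1$ in 3D, which strongly suggests a recurrence $f(L) \leq f(L-1) + 2$ obtained by exposing one level of refinement at a time.

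For the base case $L=1$, the root simplex refines into $4$ children in 2D or $8$ children in 3D, ordered by the simplicial Morton index. I would perform a direct case analysis over all contiguous subsegments of these children, using the face-adjacency pattern among children of a common parent, to verify that at most two face-connected subdomains arise; this matches the claimed bound for $L=1$.

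For the inductive step, let $S$ be a contiguous segment of the curve at refinement level $L$, and let $T_1,\ldots,T_m$ be the children of the root simplex in Morton order. There is a unique first child $T_i$ whose subtree intersects $S$ and a unique last child $T_j$. The portions of $S$ inside $T_i$ and inside $T_j$ are themselves contiguous Morton segments of depth $L-1$, so by the inductive hypothesis each of them decomposes into at most $f(L-1)$ face-connected pieces. The intermediate children $T_{i+1},\ldots,T_{j-1}$ contribute complete refined subtrees; I would record as an auxiliary lemma that any complete refined subtree of a simplex is itself face-connected, which is immediate from the recursive construction. The next ingredient is a combinatorial lemma about which \emph{consecutive} pairs in the children ordering of a simplex are face-adjacent and which only share a lower-dimensional face: in 2D, tabulating this for each triangle type, and in 3D for each tetrahedral type in Bey's subdivision. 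From this I would show that the union of the complete intermediate subtrees forms at most a constant number (at most $2$) of face-connected components at the coarsest level, and that in the worst case exactly two such components can persist after gluing the prefix and suffix pieces across the shared faces with $T_{i+1}$ and $T_{j-1}$. Combining these observations yields the recurrence $f(L) \leq f(L-1) + 2$ with the correct constants for each dimension.

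The main obstacle is the 3D case, where the eight tetrahedral children at a single level come in several congruence types and the Morton order is much less symmetric than in the cube case. A careful enumeration is needed to verify that no parent type admits more than two consecutive-pair \emph{jumps} contributing independent new components, which is exactly what is required to obtain the tight bound $2L+1$ rather than a looser multiple of $L$. A subtler point is that pieces of the fragmented prefix inside $T_i$ and suffix inside $T_j$ may merge with the intermediate block along the shared faces $T_i \cap T_{i+1}$ and $T_{j-1} \cap T_j$; these mergers must be tracked to avoid over-counting. The extension to adaptive trees then follows by the same inductive structure, since an adaptive subtree is still face-connected and the boundary segments at each level behave as in the uniform case.
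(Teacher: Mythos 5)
There is a genuine gap in your inductive step. You decompose an arbitrary segment $S$ into the part inside the first touched child $T_i$, a block of complete intermediate children, and the part inside the last touched child $T_j$, and you bound the parts in $T_i$ and $T_j$ by the inductive hypothesis $f(L-1)$ \emph{each}. That decomposition yields only $f(L)\leq 2f(L-1)+O(1)$, i.e.\ exponential growth in $L$; the asserted gluing ``in the worst case exactly two such components persist'' cannot repair this, because the fragmentation inside $T_i$ (resp.\ $T_j$) sits deep in that subtree and only the components touching the face $T_i\cap T_{i+1}$ (resp.\ $T_{j-1}\cap T_j$) can possibly merge with the middle block. The missing idea is that the two end pieces are not arbitrary segments: the piece in $T_i$ \emph{ends at the last} level-$L$ subsimplex of $T_i$ and the piece in $T_j$ \emph{starts at the first} one. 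The paper exploits exactly this: it first proves (Lemma~\ref{lem:faceconnlem1}, built on the enumeration Lemma~\ref{lem:faceconnlem0}) that such one-sided segments have at most $L+1$ components — the recursion is run on the one-sided segments themselves, where the full-subtree block at one end adds at most one component per level — and only then bounds a general segment in a single non-recursive step by $L+1+L=2L+1$ (Proposition~\ref{prop:faceconncomp}). No induction on arbitrary two-sided segments is performed, and none is needed.

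A second, related problem is the arithmetic of your claimed recurrence: $f(L)\leq f(L-1)+2$ with $f(1)=2$ gives $2L$, which is neither the 2D bound $2(L-1)$ nor the 3D bound $2L+1$; in 3D it would even settle Conjecture~\ref{con:3TMconjecture}, which the paper explicitly leaves open — a strong sign the step is not actually established. The dimension-dependent constants come from dimension-specific connectivity facts, not from a uniform ``$+2$ per level'': in 2D the sharper bound $2(L-1)$ requires Lemmas~\ref{lem:faceconn2dhelp} and~\ref{lem:faceconn2d} (for a type~0 root, any segment ending at the last subsimplex is a \emph{single} component, and the type-1 analogue for segments starting at the first), plus the merging count illustrated in Figure~\ref{fig:explaintrianglprop} that subtracts two components when both end pieces are maximally fragmented. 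Your plan would need these one-sided lemmas (and the type bookkeeping behind them) added as the central ingredient; the base case $L=1$ and the adaptive-to-uniform reduction (refine everything to level $L$, which leaves the number of face-connected subdomains unchanged, as in \secref{main}) are fine as you describe them.
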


We complete our study with a statement on the lower bound on the fraction of
continuous segments and provide an algorithm and numerical results to
illustrate the distribution of continuous vs.\ discontinuous segments.
This supports our conjecture that the tetrahedral Morton curve is no worse in
practice than the original cubical construction.


%

\section{Concepts and notation}
\seclab{concepts}

There is a natural identification between Morton-ordered elements on the one
hand and uniform and adaptive quadtrees \cite{FinkelBentley74} and octrees
\cite{Meagher82} on the other.
This is true for the tetrahedral Morton curve \cite{BursteddeHolke16} as well.
We will often refer to the elements as (sub)quadrants irrespective of the shape
or space dimension $d$.
Different ways exist to formalize the definition of a general space filling
curve; one is to identify a finite set of types of transformations and rules to
apply them recursively \cite{HaverkortWalderveen10}.
In this document we restrict the theory and notation to the minimum required to
treat the cubical and the tetrahedral Morton curve.

\subsection{The cubical Morton curve}
\seclab{concepts-cubical}

The Morton subdivision of a $d$-dimensional hypercube \cite{Morton66} can be
constructed by recursion.  When dividing a cube into $2^d$ half-size subcubes,
we enumerate these with the binary index
\begin{equation}
  \eqnlab{onelevel}
  q = (q_d \ldots q_1)_2 \in [0, 2^d) \cap \sZ
\end{equation}
comprised of $d$ bits $q_i \in \{ 0, 1 \}$.  (We will drop $\cap \sZ$ in the
following when it is clear that we are referring to integers.)
Each of the bits $i$ corresponds to the position of that subcube in the $x_i$
coordinate direction, where 0 denotes the lower and 1 the higher half.  In
our convention the most significant bit corresponds to the last dimension ($z$
in three dimensions) and the least significant bit to $x \equiv x_1$.
When counting through the possible values of $q$ we see that the $x_1$
coordinate changes its value fastest and the $x_d$ coordinate slowest.  Before
a bit at position $i$ flips, all numbers in in the lower $i-1$ bits have to be
counted through first.

We can state one central and well known fact at this point: The flip of the
$i$th bit amounts to a shift of the corresponding subcube parallel to the
coordinate direction $i$.  If we flip from zero to one, we move up, and else we
move down the axis.  It is easy to see that flipping one bit transforms the
subcube into its neighbor across a face with normal direction $\pm x_i$.

We define a recursion by subdividing each subcube further using the same
prescription.  The root cube is associated with level $\ell = 0$, with levels
increasing with each subdivision.  Subcubes exist at any level $\ell$ and are
identified with the root of a corresponding subtree.  Level-$L$ subtrees are
also called subquadrants.  We count the sequence of level $L$ (sub)quadrants
with the index
\begin{equation}
  \eqnlab{levels}
  Q = (q^1 \ldots q^L)_2 \in [0, 2^{dL})
  ,
\end{equation}
where each level-wise index $q^\ell$ is defined as in \eqnref{onelevel}.
They designate the choice of subquadrants from the first subdivision
$\ell = 1$ to the last at level $\ell = L$.  This sequence of choices can be
understood as the path from the root to the leaf of a decision tree, where each
decision is between $2^d$ possibilities.  The subset of $\mathbb{R}^d$
occupied by the quadrant with index $Q$ is
\begin{equation}\eqnlab{rdset}
  \begin{aligned}
    \Omega(Q) := &[2^{-L}(q^1_1 q^2_1 \dots q^L_1)_2,
     2^{-L}((q^1_1 q^2_1 \dots q^L_1)_2 + 1)] \times \\
    &[2^{-L}(q^1_2 q^2_2 \dots q^L_2)_2,
     2^{-L}((q^1_2 q^2_2 \dots q^L_2)_2 + 1)] \times \\
    &\vdots \\
    &[2^{-L}(q^1_d q^2_d \dots q^L_d)_2,
     2^{-L}((q^1_d q^2_d \dots q^L_d)_2 + 1)].
   \end{aligned}
\end{equation}

We define a full or complete subtree by the set of all its descendant
quadrants.  A subtree is incomplete if the quadrants form a strict subset of
descendants that are contiguous with respect to the indexing \eqnref{levels}.
We call such a subset a segment of a Morton curve in the following (two
examples are depicted in \figref{morton}).

We will make use of the following symmetry property of the Morton curve: It can
be traversed forward or in reverse.  The reversal amounts to go through the
indexing \eqnref{levels} by counting backwards.  A quadrant is transformed into
the reverse ordering by taking the bitwise negation (the one-complement) of its
index,
\begin{equation}
  \eqnlab{Rdef}
  R(Q) = 2^{dL} - 1 - Q.
\end{equation}
Geometrically, this operation mirrors the quadrant around the center point of
the root cube.

\subsection{The simplicial Morton curve}
\seclab{concepts-simplicial}

The tetrahedral Morton (TM) SFC applies to triangular and tetrahedral
red-refinement of a mesh (and, conceptually, to higher dimensional simplices)
\cite{BursteddeHolke16}.
We encounter 1:4 refinement in 2D and 1:8 refinement in 3D \cite{Bey92},
which means that the quad-/octree interpretation is still valid.
We compute the SFC in a bitwise fashion that is an extension of the
traditional Morton curve.
In order to define the TM SFC we introduce the concept of the type
of a simplex.

\begin{figure}
\center
\begin{minipage}{0.48\textwidth}
   \def\svgwidth{40ex}
   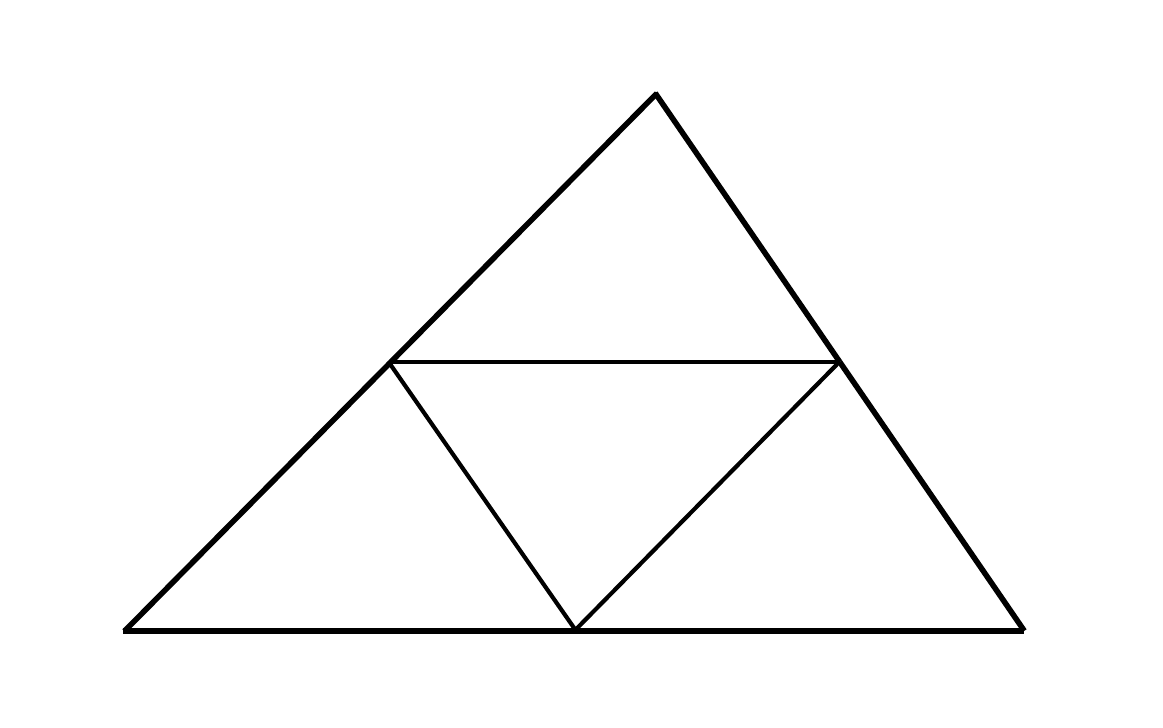
\end{minipage}
\begin{minipage}{0.48\textwidth}
   \def\svgwidth{40ex}
   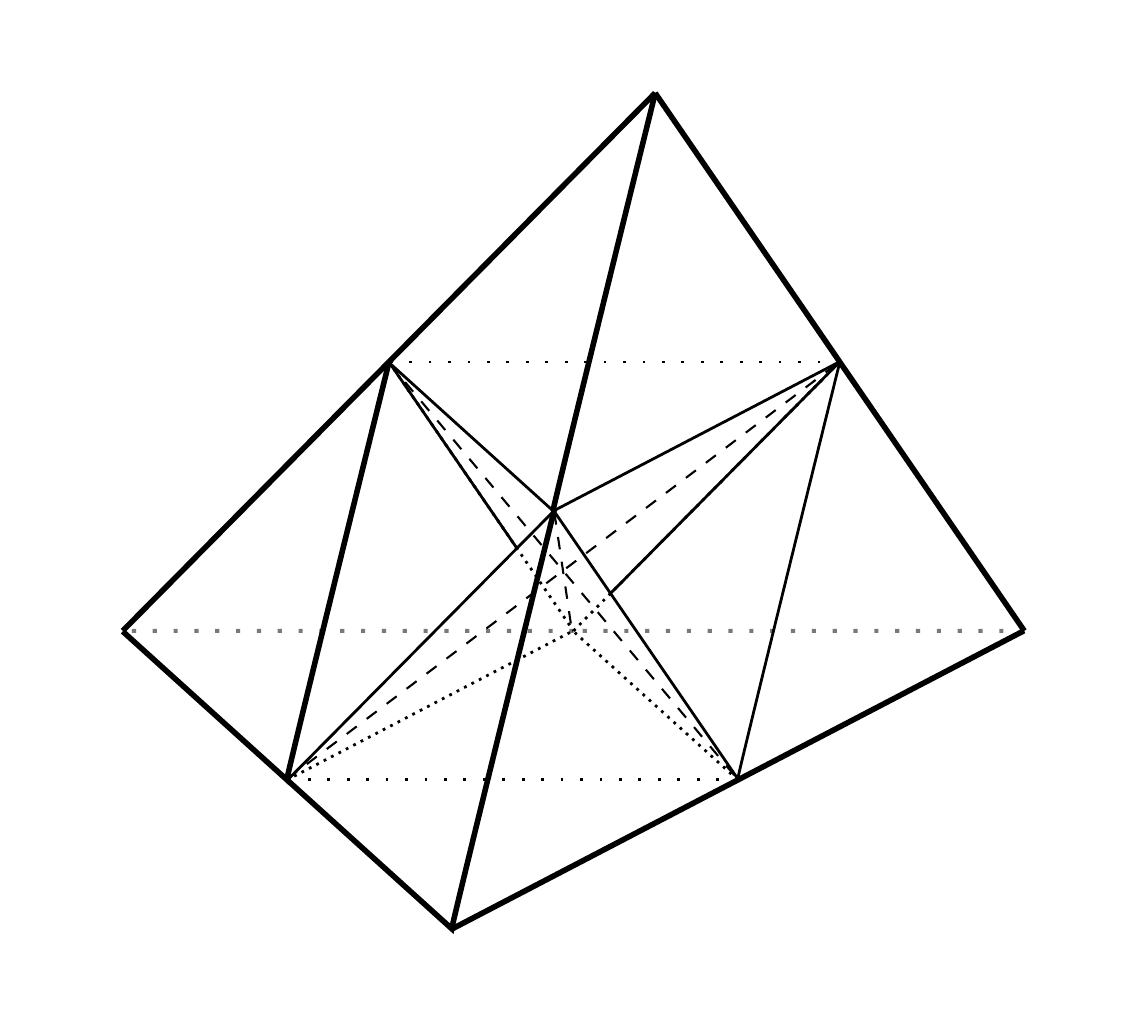
\end{minipage}
   \caption{%
   Left: the refinement scheme for triangles in two dimensions.  A triangle
   $T=[x_0,x_1,x_2]\subset\sR^2$ is refined by dividing each face at the midpoints
   $x_{ij}$. We obtain four smaller triangles, all similar to $T$.
   Right: the
   situation in three dimensions.  If we divide the edges of the tetrahedron
   $T=[x_0,x_1,x_2,x_3]\subset\sR^3$ in half, we get four smaller tetrahedra
   (similar to $T$) and one octahedron.  By dividing the inner octahedron along any
   of its three diagonals (shown dashed) we finally end up with a partition of
   $T$ into eight smaller tetrahedra, all having the same volume.  The refinement
   rule of Bey is obtained by always choosing the diagonal from $x_{02}$ to
   $x_{13}$ and numbering the corners of the children according to
   \eqref{eq:childnumbers}.}%
   \label{fig:cutoff}%
\end{figure}%

\begin{dfn}
  We describe a $d$-dimensional simplex $T\subset\sR^d$ by $d+1$ ordered
  vertices $x_0,\dots,x_d\in\sR^d$
  and write $T=[x_0,\dots,x_d]$.
  By $x_{ij}$ we denote the midpoint between $x_i$ and $x_j$.

  Bey's \emph{red-refinement} rule \cite{Bey92} for a triangle ($d=2$) or
  tetrahedron ($d=3$) amounts to dividing the parent simplex
  $T=[x_0,\dots,x_d]$ into $2^d$ subsimplices that are defined and enumerated
  as follows (see also Figure~\ref{fig:cutoff}):
 \begin{subequations}
\label{eq:childnumbers}
 \begin{equation}
 \label{eq:childnumbers_2d}
 d=2:\quad
 \begin{array}{cccccl}
  T_0&:=&[x_0,x_{01},x_{02}], &  T_1&:=&[x_{01},x_{1},x_{12}],\\
  T_2&:=&[x_{02},x_{12},x_{2}], & T_3&:=&[x_{01},x_{02},x_{12}],
  \end{array}
 \end{equation}
 \begin{equation}
\label{eq:childnumbers3d}
d=3:\quad
 \begin{array}{cccccc}
 T_0 &:=& [x_0,x_{01},x_{02},x_{03}],   & T_4 &:=& [x_{01},x_{02},x_{03},x_{13}],\\
 T_1 &:=& [x_{01},x_{1},x_{12},x_{13}], & T_5 &:=& [x_{01},x_{02},x_{12},x_{13}],\\
 T_2 &:=& [x_{02},x_{12},x_{2},x_{23}], & T_6 &:=& [x_{02},x_{03},x_{13},x_{23}],\\
 T_3 &:=& [x_{03},x_{13},x_{23},x_{3}], & T_7 &:=& [x_{02},x_{12},x_{13},x_{23}].

\end{array}
 \end{equation}
 \end{subequations}
\end{dfn}


\begin{dfn}[Type of a simplex]
  We start with a unit square/cube divided as in Figure~\ref{fig:typeofsimplex}
  and pick any of the triangles/tetrahedra as root simplex for refinement.
  Each subsimplex in a uniform level $L$ refinement of this root simplex
  is contained in a subsquare/subcube of level $L$ and is exactly one of the
  two (2D) or six (3D) simplices from Figure~\ref{fig:typeofsimplex}.
  It thus has a unique number, which we define as the \emph{type} of the simplex.
\end{dfn}

\begin{figure}
\def\svgwidth{0.4\textwidth}
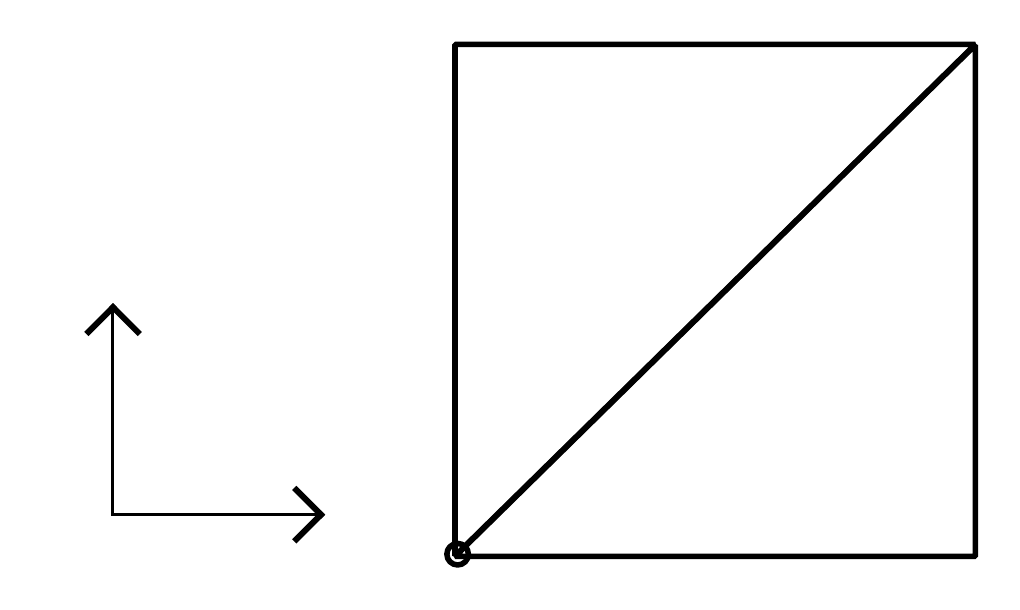
\def\svgwidth{0.58\textwidth}
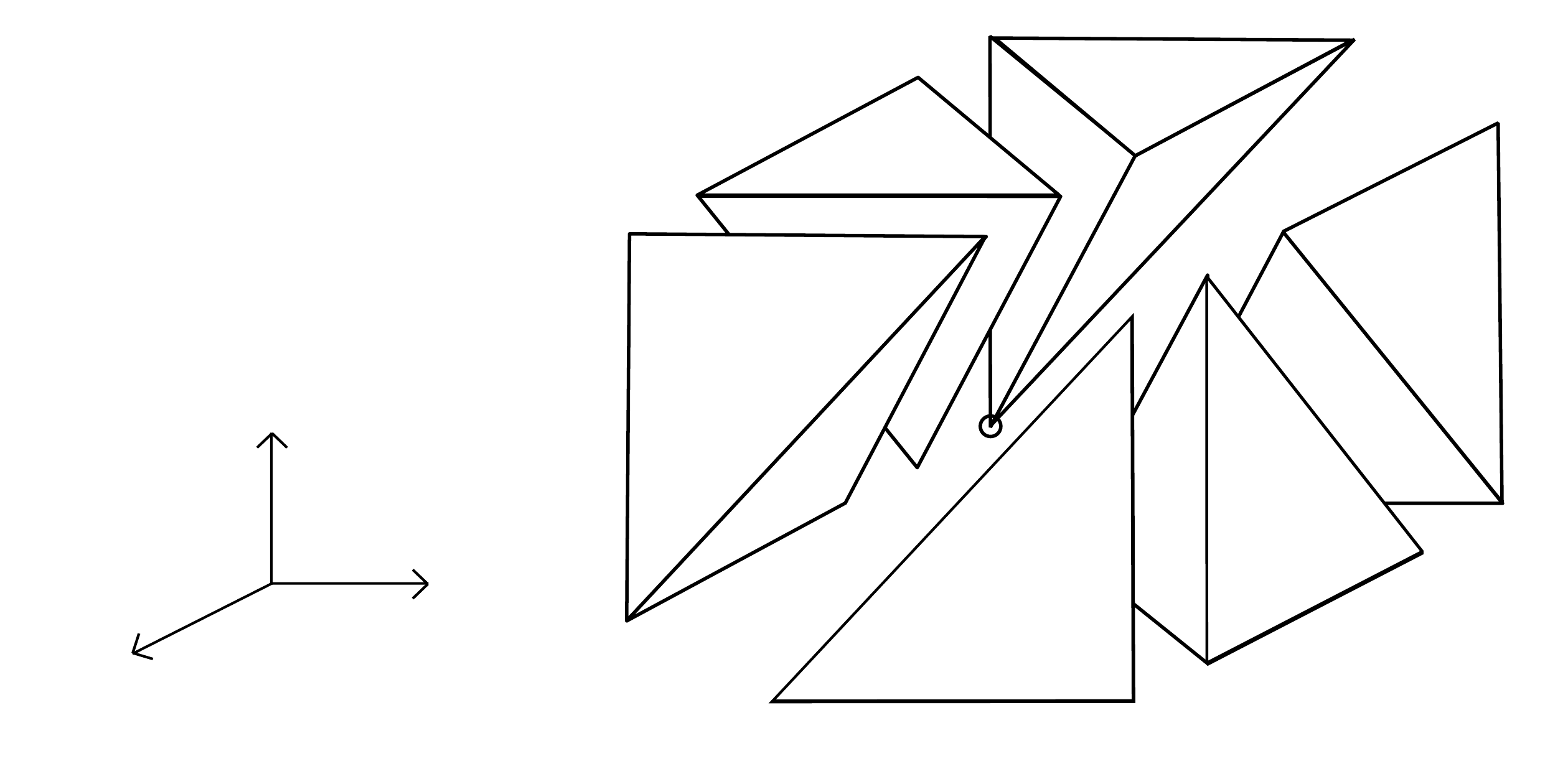
\caption{We can separate an axis aligned cube into subsimplices by dividing it
         along one diagonal. We enumerate the resulting subsimplices and call
         the number of a subsimplex its type.
         Left: We divide a 2D square into two triangles, the lower right one
               has type zero and the upper left one has type 1.
         Right: We divide a 3D cube into six tetrahedra and enumerate them
               counterclockwise from zero to five (exploded view).}
\label{fig:typeofsimplex}
\end{figure}

We start on level 0 with the root simplex $T_d^0$, which can have any of the
possible types.
In our implementation we pick 0 as the type of the root simplex.
The TM code $m(T)$ for a descendant $T$ of the root simplex is the interleaving
of its anchor (lower left) node coordinates with the types of all of $T$'s
ancestor simplices \cite{BursteddeHolke16}.
It creates a total order between all simplices of a given level and thus
establishes the SFC.
Here we give a second, recursive definition of the SFC that is more suitable
for our purposes.

By Proposition~18 in \cite{BursteddeHolke16} we obtain one permutation
$\sigma_b \in \Sigma_{2^d}$ for each possible simplex type $b$.
It relates the ordering of
its
children to the SFC
such
that for any $d$-simplex $T$ with $\type(T) = b$
\begin{equation}
 m(T_{\sigma_b(0)})<m({T_{\sigma_b(1)}})<\dots<m({T_{\sigma_b(2^d-1)}})
 .
\end{equation}
It places the child $T_i$ in Bey's
order at SFC position $\sigma_b(i)$.
\begin{dfn}
 \label{def:localindex}
 Let $T$ be a level $L$ descendant of $T_d^0$ such that $T's$ parent $P$ has
 type $b$ and $T$ is the $i$-th child of $P$ according to Bey's order
 \eqref{eq:childnumbers},
  $0\leq i<2^d$. We call the number $\sigma_b(i)$ the \emph{local index} of
 the $d$-simplex $T$ and use the notation
 \begin{equation}
I_\mathrm{loc}(T):=\sigma_b(i).
 \end{equation}
 By definition, the local index of the root simplex is zero, $I_\mathrm{loc}(T^0_d):=0$.
 Table~\ref{table:BeytoIndex} lists the local indices for each parent type.
\end{dfn}
Thus, we know for each type $0\leq b<d!$ how the children of a tetrahedron of type $b$ are traversed.
This gives us an approach for describing the SFC arising from the TM-index in a recursive fashion \cite{HaverkortWalderveen10}.
By specifying for each possible type $b$ the order and types of the children of a type $b$ simplex, we can build up the SFC.
In Figure~\ref{fig:haverkortSFC} 
we describe the SFC for triangles in this way.
In three dimensions it is not convenient to draw the six pictures for the
different types, yet
the SFC can be derived similarly from \eqref{eq:childnumbers} and Table~\ref{table:BeytoIndex}.

\begin{rem}
\label{rem:simplexsym}
  In 2D, we will make use of a symmetry property similar to \eqnref{Rdef}:
  Reversing the TM curve in a uniform refinement of a type 0 triangle
  results in the (forward) TM curve for a type 1 triangle, and vice versa.
\end{rem}

\begin{table}
\centering
\raisebox{6ex}{
\begin{tabular}{|rc|cccl|}
\hline
\multicolumn{2}{|c|}{ \mytabvspace$I_\mathrm{loc}$}&\multicolumn{4}{c|}{Child}\\
 \multicolumn{2}{|c|}{2D}  &\mytabvspace  $T_0$ & $T_1$ & $T_2$ & $T_3$ \\[0.2ex]\hline
 \multirow{2}{*}{b}&\mytabvspace0 & 0 & 1 & 3 & 2    \\[0.2ex]
 &1 & 0 & 2 & 3 & 1   \\ \hline
\end{tabular}
}
\begin{tabular}{|rc|cccccccl|}
\hline
\multicolumn{2}{|c|}{\mytabvspace$I_\mathrm{loc}$}&\multicolumn{8}{c|}{Child}\\
\multicolumn{2}{|c|}{3D}&
\mytabvspace   $T_0$ & $T_1$ & $T_2$ & $T_3$ & $T_4$ & $T_5$ & $T_6$ & $T_7$\\[0.2ex]\hline
 \multirow{6}{*}{b}&\mytabvspace0 & 0 & 1 & 4 & 7 & 2 & 3 & 6 & 5   \\[0.2ex]
 &1 & 0 & 1 & 5 & 7 & 2 & 3 & 6 & 4   \\[0.2ex]
 &2 & 0 & 3 & 4 & 7 & 1 & 2 & 6 & 5   \\[0.2ex]
 &3 & 0 & 1 & 6 & 7 & 2 & 3 & 4 & 5   \\[0.2ex]
 &4 & 0 & 3 & 5 & 7 & 1 & 2 & 4 & 6   \\[0.2ex]
 &5 & 0 & 3 & 6 & 7 & 2 & 1 & 4 & 5   \\ \hline
\end{tabular}
\caption{The local index of a $d$-simplex $T$. For each $b = \type(T)$, the
    $2^d$ children $T_0,\dots,T_{2^d-1}$ of $T$ can be ordered
    according to their TM-indices.
    The position of the $i$-th child according to this order is the local index $I_\mathrm{loc}(T_i)$.}  
\label{table:BeytoIndex}
\end{table}

\begin{figure}
   \center
   \def\svgwidth{0.9\textwidth}
   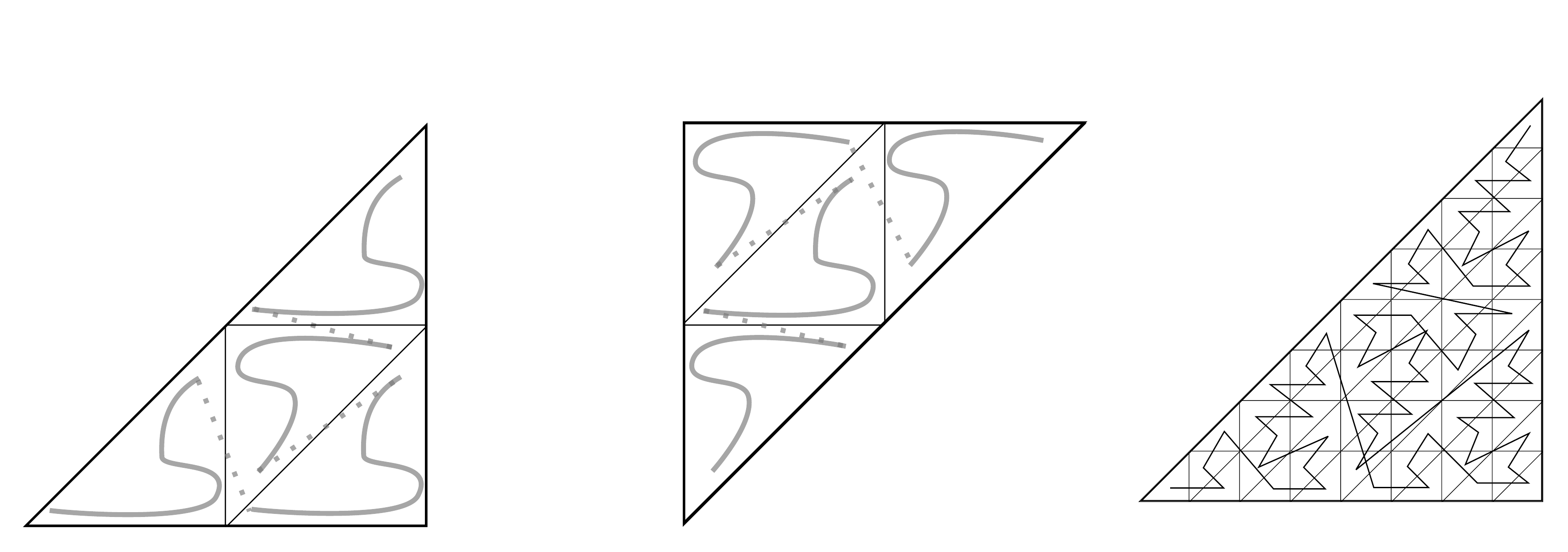
    \caption{Left: Using the notation from \cite{HaverkortWalderveen10} we
    recursively describe the space-filling curve arising from the TM-index for
    triangles. The numbers inside the child triangles $T_i$
    are their local indices $I_\mathrm{loc}(T_i)$.
    We write $R$ for the refinement scheme of type 0 triangles and $F$ for type
    1 triangles.
    This pattern can be obtained from \eqref{eq:childnumbers} and
    Table~\ref{table:BeytoIndex}.
    Right: the SFC for a uniform level 3 refinement of the root triangle.}
    \label{fig:haverkortSFC}
\end{figure}%

\section{Illustrated proofs for $d \le 3$}
\seclab{illustrated}

This section is devoted to proofs that use geometric intuition in two and
three dimensions.
For the cubical Morton curve, the idea is not new (although the execution in 3D
seems to be).
For the tetrahedral Morton curve, this is the first such study as far as we
know.
For abstract proofs for cubes of arbitrary dimension $d$ we refer the reader to
\secref{arbitrary}.

\subsection{The cubical case}
\seclab{illustrated-cubical}

In this section we prove a set of statements for cubes up to three dimensions
by providing selected illustrations and covering all possible cases.
A similar argument has been explored before in two dimensions \cite{Bader12},
while an abstract proof for two dimensions can be found in
\cite{deBergHaverkortThiteEtAl10}.

We begin with statements that assume a curve that either begins with the first
subquadrant of the unit cube or ends with its last subquadrant.  In a second
step, we use these statements to prove the final result.  All statements are
stated for arbitrary levels of refinement $L \ge 0$.  In fact, all statements
are trivially true for one dimension $d=1$ (with no jumps at all); in this
section we cover $d = 2$ and $d = 3$.
\begin{figure}
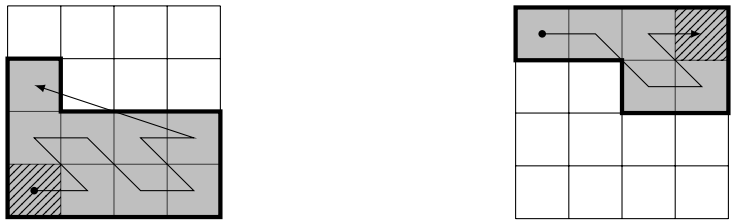
\centering
  \begin{minipage}{0.49\textwidth}\centering
    \inputtikz{onepiece1}
  \end{minipage}
  \begin{minipage}{0.49\textwidth}\centering
    \inputtikz{onepiece2}
  \end{minipage}
  \caption{We show two $L = 2$ examples of Morton curve segments that begin
           with the first subquadrant of a tree (left) and end with its last
           subquadrant (right), respectively.  In both cases the segment covers
           one face-connected subdomain.}
  \figlab{onepiece}
\end{figure}
\begin{prop}
  \label{illpropfirst}
  In a quadtree (or octree) $T$ that is uniformly refined to level $L$, a
  contiguous segment of a Morton curve that begins with the first subquadrant
  in $T$ creates exactly one subdomain of face-connected quadrants, no matter
  where it ends.
\end{prop}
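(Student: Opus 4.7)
I would prove this by induction on the refinement level $L$, with the base case $L=0$ trivial (segments contain at most one quadrant). The inductive step exploits the fact that the level-$L$ Morton curve traverses the $2^d$ top-level children in order $0,1,\dots,2^d-1$ and, within each child, runs the level-$(L-1)$ Morton curve. An initial segment $S$ therefore decomposes uniquely as $S = U_c \cup S_c$, where $U_c = \bigcup_{j=0}^{c-1}\text{child}(j)$ for some $0\le c\le 2^d$ and $S_c$ is a (possibly empty) initial segment of child $c$'s Morton curve. By the induction hypothesis, $S_c$ is face-connected inside child $c$.

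The work then reduces to two connectivity claims. First, an inner induction on $c$ shows that $U_c$ is face-connected: for any $c\ge 1$, pick a bit position $i$ with $c_i=1$; then child $c$ and child $c-2^{i-1}$ differ in exactly one coordinate bit and hence share a face, and since $c-2^{i-1}<c$, adjoining child $c$ to $U_c$ preserves face-connectivity. Second, when $c\ge 1$ and $S_c\neq\emptyset$, the piece $S_c$ must attach to $U_c$: the first subquadrant of child $c$ (always contained in $S_c$) has level-$L$ integer coordinates $I_i = 2^{L-1} c_i$, and for any $i$ with $c_i=1$ its face-neighbor in the $-x_i$ direction has $I'_i = 2^{L-1}-1 = (0\,1\,1\dots 1)_2$, placing that neighbor in child $c-2^{i-1}\subset U_c$. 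Combining the two observations yields face-connectivity of $S = U_c\cup S_c$.

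\textbf{Main obstacle.} The subtlety is that consecutive children in Morton order are \emph{not} generally face-adjacent -- children $1=(0\dots01)_2$ and $2=(0\dots10)_2$ are diagonally opposite, for instance -- so any ``add one quadrant at a time'' argument fails and the jumps in the curve do show up. The remedy is to bundle whole earlier children together and to exploit the elementary combinatorial fact that every nonzero $c$ carries at least one set bit, which immediately supplies a face-neighbor among the children with smaller Morton index. Once this bit-level observation is localized to the corner subquadrant of each newly entered child, the inductive structure carries the proof with no further case analysis, and the ``starts with the first subquadrant'' hypothesis plays the essential role of making sure we always enter each new child at that corner.
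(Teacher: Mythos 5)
Your proof is correct, but it takes a different route from the paper's own proof of this proposition. The paper proves Proposition~\ref{illpropfirst} inside the ``illustrated'' section: after the same outer induction on $L$ and the same decomposition into $j$ full level-1 children plus one possibly incomplete last child containing its first subquadrant $q$, it finishes by enumerating the possible values of $j$ and exhibiting the face connections in figures, separately for $d=2$ (three cases) and $d=3$ (seven cases). You instead replace that enumeration by a bit-level argument: every nonzero child index has a set bit, clearing set bits one at a time walks through face-neighbors with strictly smaller Morton index (connecting the full children $U_c$), and the corner subquadrant of the newly entered child $c$, whose level-$L$ coordinates are $2^{L-1}c_i$, attaches across one face to the full child $c-2^{i-1}$. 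This is dimension-independent and requires no case analysis, so it proves more than the $d\le 3$ statement as stated here; in effect you have anticipated the paper's own general argument in Proposition~\ref{arboneended} (Section~\ref{sec:illustrated-arbitrary}), which uses exactly the same two ingredients (flip low bits to zero to reach the first subtree; the first subquadrant of a nonempty subtree $j$ face-connects to the full tree $j-2^{d'-1}$), though the paper organizes it with an additional inner induction over a sub-dimension $d'$ via the intervals $\Idk$ so that it also covers segments confined to a sub-range of children. What the paper's illustrated proof buys is geometric transparency for the low-dimensional reader; what yours buys is uniformity in $d$ with no figures or case lists. One cosmetic point: in your inner induction you should say that you adjoin child $c'$ to $U_{c'}$ for each $1\le c'\le c-1$ (and note, as is immediate, that a fully refined child is itself face-connected), but this does not affect the validity of the argument.
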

\begin{cor}
  \label{illcorlast}
  In the situation of Proposition~\ref{illpropfirst}, a contiguous segment that ends
  with the last subquadrant in $T$ creates exactly one face-connected
  subdomain, no matter where it begins (see \figref{onepiece} for an
  illustration).
\end{cor}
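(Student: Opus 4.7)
The plan is to deduce Corollary~\ref{illcorlast} from Proposition~\ref{illpropfirst} by invoking the reversal symmetry of the Morton curve recorded in \eqnref{Rdef}. The map $R(Q) = 2^{dL} - 1 - Q$ takes the Morton enumeration of the level-$L$ quadrants of the unit cube to its reverse; geometrically, it is the point reflection about the center of the cube, hence an isometry that permutes the level-$L$ subquadrants. In particular, two subquadrants share a face if and only if their images under $R$ do, so face-connectedness of a set of quadrants is preserved under $R$.

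First I would let $S$ be a contiguous Morton segment in a uniformly refined tree $T$ of depth $L$ that ends with the last subquadrant. Then $R(S)$ is a contiguous segment of the same tree, because $R$ maps contiguous intervals of $[0, 2^{dL})$ to contiguous intervals (it is an order-reversing bijection of that interval). Moreover, the last subquadrant of $T$ is $Q = 2^{dL} - 1$, and $R(2^{dL} - 1) = 0$ is exactly the first subquadrant, so $R(S)$ begins with the first subquadrant of the tree.

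Now I would apply Proposition~\ref{illpropfirst} to the reversed segment $R(S)$: it forms exactly one face-connected subdomain. Pulling back by $R^{-1} = R$ and using that face-connectedness is preserved under $R$, I conclude that $S$ itself forms exactly one face-connected subdomain, independent of the starting index, which is what the corollary asserts.

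There is essentially no obstacle here beyond carefully verifying that the reversal really is a rigid transformation on the level-$L$ subquadrants — this is built into the geometric description following \eqnref{Rdef} — and that contiguity and endpoint choice are interchanged by $R$, which follows immediately from its formula. So the corollary is a direct symmetry consequence of the proposition, and no new case analysis is needed.
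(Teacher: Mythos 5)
Your proposal is correct and follows essentially the same route as the paper: the paper's proof of Corollary~\ref{illcorlast} also invokes the reversal symmetry \eqnref{Rdef} to map the segment ending in the last subquadrant to one beginning with the first, then applies Proposition~\ref{illpropfirst}. You simply spell out the details (order-reversing bijection, point reflection preserving face-adjacency) that the paper leaves implicit.
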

\begin{proof}
  Assuming that Proposition~\ref{illpropfirst} is true, we can use the symmetry of
  the $z$-curve with respect to reversal to transform the present
  problem into the setting covered in Proposition~\ref{illpropfirst}.
\end{proof}
\begin{figure}
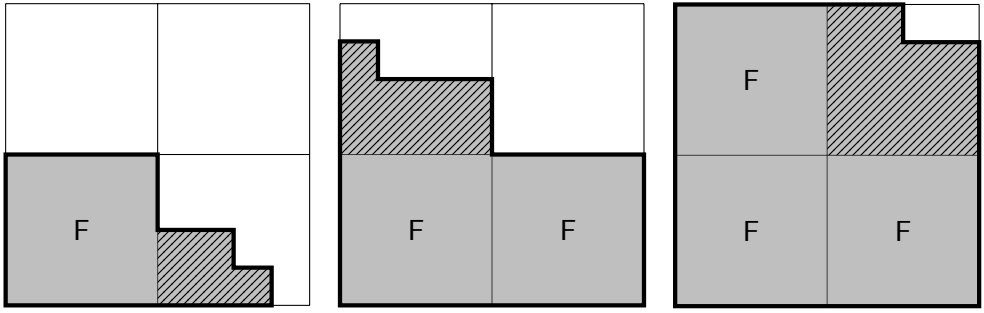
\centering
  \begin{minipage}{0.32\textwidth}\centering
    \inputtikz{twodproof1}
  \end{minipage}
  \begin{minipage}{0.32\textwidth}\centering
    \inputtikz{twodproof2}
  \end{minipage}
  \begin{minipage}{0.32\textwidth}\centering
    \inputtikz{twodproof3}
  \end{minipage}
  \caption{Proof of Proposition~\ref{illpropfirst}:  The three non-trivial cases
    that occur in two dimensions (we choose $L = 3$).  The letter $F$
    designates a fully covered subtree of level $L - 1$.  It is crucial that
    the lower left corner of the hatched area touches at least one of the full
    subtrees across a face.}
  \figlab{twodproof}
\end{figure}
\begin{figure}
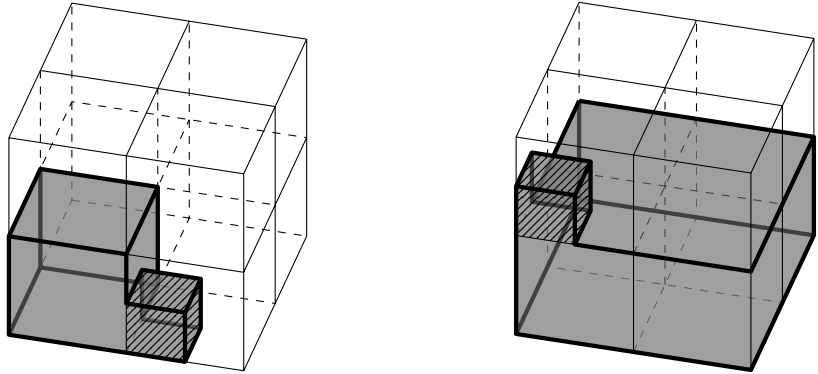
\centering
  \begin{minipage}{0.49\textwidth}\centering
    \inputtikz{threedproof1}
  \end{minipage}
  \begin{minipage}{0.49\textwidth}\centering
    \inputtikz{threedproof2}
  \end{minipage}
  \caption{Proof of Proposition~\ref{illpropfirst}:   Selected cases in three
    dimensions ($j = 1, 4$ out of the seven non-trivial ones).  The full
    subtrees are shaded lightly.  Again we exploit the fact that the lower
    left front corner of the last non-empty subtree (hatched) connects to at
    least one full subtree with a lower subtree index across a face.}
  \figlab{threedproof}
\end{figure}
\begin{proof}[Proof of Proposition~\ref{illpropfirst}]
  We proceed by induction over $L$.  Starting with $L=0$, we only have one
  element and the statement is true.  Supposing $L > 0$, we can identify the
  number $j \in [0, 2^d)$ that designates in which level 1
  subquadrant of the tree the last level $L$ subquadrant of the segment lies.
  If $j = 0$ then the whole segment is contained in a level $L-1$ subtree and
  we can apply the induction assumption.  Each of the remaining cases produces
  $j$ full subtrees and one possibly incomplete one.  That last subtree
  necessarily contains its first level $L$ subquadrant $q$.  Since this subtree
  produces one subdomain by induction, we are done by arguing that the full
  subtrees are face-connected to each other and to $q$, directly or indirectly.
  For two dimensions we show the three possible cases in \figref{twodproof},
  all of which satisfy the statement.  For three dimensions we proceed by
  enumeration as well; we show selected situations in \figref{threedproof} to
  conclude the proof.
\end{proof}
Now that we have identified situations that produce one subdomain only, we can
prove the main statement for arbitrary segments by a divide-and-conquer
approach.
\begin{prop}
  \label{illpropboth}
  In a quadtree or octree that is uniformly refined to level $L$, a contiguous
  segment of the Morton curve creates no more than two distinct face-connected
  subdomains.
\end{prop}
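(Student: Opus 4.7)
I plan to proceed by induction on $L$, dividing the segment at level-1 subquadrant boundaries and applying Proposition~\ref{illpropfirst} and Corollary~\ref{illcorlast} to the resulting pieces. The base case $L = 0$ is trivial. For $L \geq 1$, let $j_s, j_e \in [0, 2^d)$ index the level-1 subquadrants containing the first element $q_s$ and the last element $q_e$ of the segment. If $j_s = j_e$, the entire segment lies inside a single level-$(L-1)$ subtree and the inductive hypothesis finishes the case.

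If $j_s < j_e$, I decompose the segment into a disjoint union $A \cup B \cup C$: the portion $A$ inside subquadrant $j_s$ is a suffix ending with the last element of that subtree, the portion $C$ inside subquadrant $j_e$ is a prefix beginning with the first element of that subtree, and $B$ is the union of the full intermediate subquadrants indexed $j_s+1, \ldots, j_e-1$. By Corollary~\ref{illcorlast} applied to the level-$(L-1)$ subtree $j_s$, $A$ is a single face-connected subdomain; by Proposition~\ref{illpropfirst} applied to the subtree $j_e$, so is $C$. Each subquadrant in $B$ is a full subtree and hence itself a single face-connected subdomain.

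The remaining task is to bound the number of face-connected components of $A \cup B \cup C$ by two. The key geometric observation is that the last element of subquadrant $j_s$, which belongs to $A$, lies at the maximum-coordinate corner of that subquadrant and is therefore face-adjacent to an element in each level-1 subquadrant of the form $j_s + 2^i$ at every bit position $i$ where $j_s$ has a zero; symmetrically, the first element of subquadrant $j_e$, which lies in $C$, is face-adjacent to an element in each subquadrant of the form $j_e - 2^i$ at every bit position where $j_e$ has a one. A short bit-arithmetic check shows that whenever $B$ is non-empty at least one of these corner-neighbors falls inside $\{j_s+1, \ldots, j_e-1\}$, so that $A$ or $C$ face-connects into $B$. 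I then finish by enumerating the configurations of $(j_s, j_e)$ in the style of \figref{twodproof} and \figref{threedproof}, verifying in each case that the corner-connections together with the induced hypercube face-adjacencies inside $B$ collapse everything to at most two face-connected components.

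The main obstacle is exactly this last verification: even when the full subquadrants in $B$ themselves split into several hypercube-components, and even when $A$ and $C$ each attach to only one of them, one still has to rule out the appearance of a third stray component. For $d \leq 3$ this is handled by enumeration; a dimension-independent, non-inductive argument that in addition establishes star-shapedness is deferred to \secref{arbitrary}.
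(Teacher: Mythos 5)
Your proposal is correct and follows essentially the same route as the paper: induction over $L$, splitting the segment into a suffix in the first nonempty level-1 subtree (Corollary~\ref{illcorlast}), full intermediate subtrees, and a prefix in the last one (Proposition~\ref{illpropfirst}), then closing with a finite case enumeration for $d \le 3$. The only minor difference is your connecting observation (corner elements of the incomplete subtrees attaching to a full subtree of index $j_s + 2^i$ or $j_e - 2^i$), where the paper instead notes that adjacent nonempty subtrees with even--odd indices are face-connected because at least one of them is full; both variants still rest on the same concluding enumeration.
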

\begin{figure}
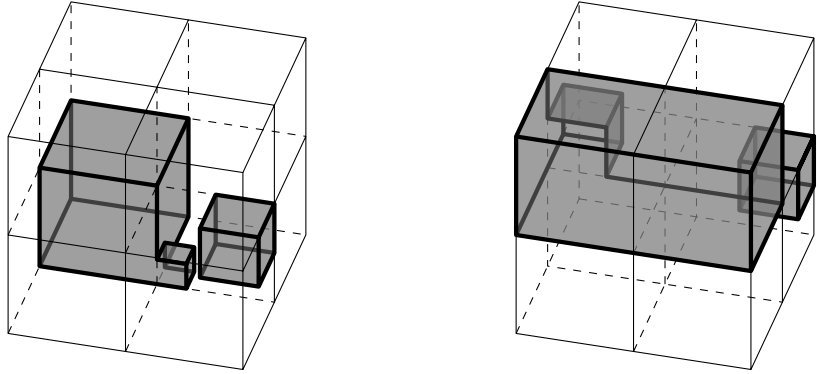
\centering
  \begin{minipage}{0.49\textwidth}\centering
    \inputtikz{threedcases1}
  \end{minipage}
  \begin{minipage}{0.49\textwidth}\centering
    \inputtikz{threedcases2}
  \end{minipage}
  \caption{The cases 1--3 and 3--6{} in the proof of
           Proposition~\ref{illpropboth} for $d = 3$ dimensions.  Each of these
           examples produces two distinct face-connected subdomains.}
  \figlab{threedcases}
\end{figure}
\begin{proof}
  We proceed by induction over $L$.
  Again, the case $L = 0$ leaves nothing to prove.  If the segment of the curve
  is contained in one level $L - 1$ subtree, the proof is finished by
  induction.  Else we have one subtree in which the segment begins, zero to
  $2^{d} - 2$ fully covered subtrees, and one subtree in which the segment
  ends.  To the first nonempty subtree we can apply Corollary~\ref{illcorlast},
  while Proposition~\ref{illpropfirst} applies to the last one.  Thus we know that
  the possibly incomplete subtrees lead to one connected piece each.  The case
  of two nonempty subtrees is thus completed and it remains to consider three
  or more.

  Now, whenever any two adjacent nonempty subtrees have even-odd numbers, they
  are face-connected since at least one of them must be full.  This covers the
  remaining three- and four-subtree cases in two dimensions.  In three
  dimensions, this clears all situations with three non-empty subtrees.  Since
  we can further reduce the number of remaining cases by symmetry, it remains
  to examine the subtree ranges $(i, \ldots, i + 3)$ through $(i, \ldots, 7)$
  for $i = 0, \ldots, 3$.  All of these cases satisfy our claim; we
  illustrate a few in \figref{threedcases}.
\end{proof}
We have completed the necessary proofs for a uniform space division into cubes
in $d \le 3$.
In \secref{illustrated-arbitrary} we extend the proof to arbitrary
dimension $d$.
The case of adaptive space divisions is considered in \secref{main}.

\subsection{The simplicial case}
\seclab{illustrated-simplicial}

In this subsection we examine the number of face-connected components of a
segment of the tetrahedral Morton SFC, $d = 2$ or $3$.
As we show in Figure~\ref{fig:face_conn_ex}, there exist cases where the number
of face-connected components in a uniform 2D level $L$ refinement can be as
high as $2(L-1)$.
We show that this is in fact a sharp upper bound.
We also show that in three dimensions the number of face-connected components
does not exceed $2L+1$. There exists an example with $2L$ face-connected
components and we conjecture that $2L$ is in fact the sharp estimate.
The proof of these bounds is fairly analogous to the results for cubes and 
relies and a divide-and-conquer approach by splitting the segment into 
subsegments of which we know the number of face-connected components.
The main difference to the cubical Morton curve is that we do not have a
strong symmetry property like \eqnref{Rdef}, and thus an analogue to
Corollary~\ref{illcorlast} only exists in a weaker form.

\begin{figure}
   \begin{minipage}{0.6\textwidth}
   \center
   \def\svgwidth{0.7\textwidth}   
   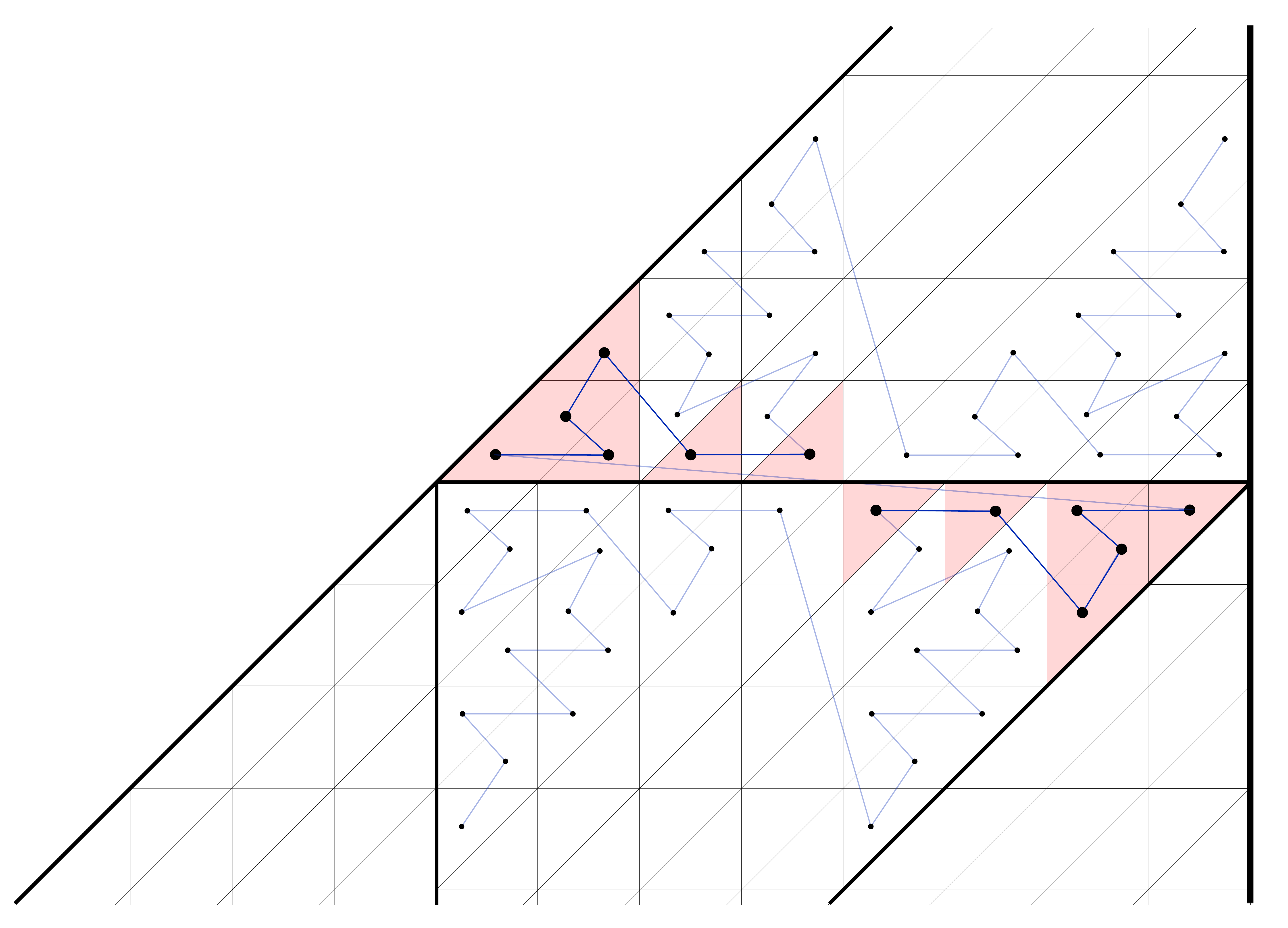
   \end{minipage}
   \begin{minipage}{0.38\textwidth}    
   \center
   \def\svgwidth{0.8\textwidth}   
   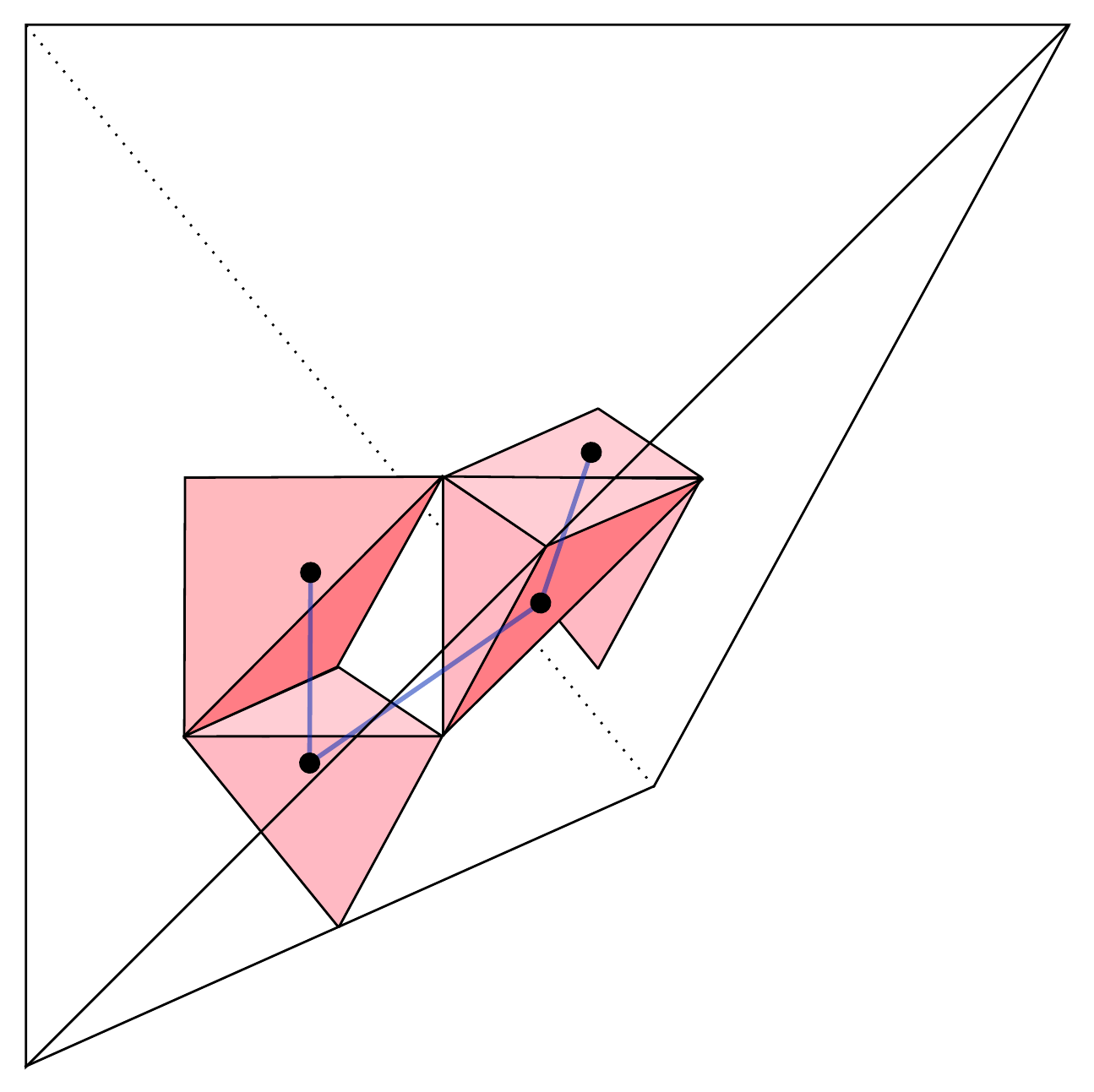
   \end{minipage}
   \caption{Left: a segment of the 2D SFC on a level $4$ refinement of $T^0_2$
            with six face-connected components (shaded pink).
   The number of face-connected components in 2D can be as high as $2(L-1)$;
   this estimate is sharp.
   Right: a 3D level 2 refinement of $T^0_3$ with four ($=2L$) face-connected
   components.
   We prove that an upper bound on the number of face-connected components is
   $2L+1$ and conjecture that $2L$ is sharp.
  }
  \label{fig:face_conn_ex}
\end{figure}%

\begin{lem}
  \label{lem:faceconn2dhelp}
  The following two properties hold for the TM-index in 2D, where we consider a
  uniform level $L$ refinement of an initial type 0 triangle $T$.
  \begin{itemize}
    \item Each type 1 subsimplex is face-connected to a type 0
      subsimplex with a greater TM-index.
    \item Each type 0 subsimplex that is also a descendant of the level 1,
      type 1 subtriangle $T_3$
      is face-connected to a type 1 subsimplex with a greater TM-index.
  \end{itemize}
\end{lem}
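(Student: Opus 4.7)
The plan is induction on $L$, with the base cases $L = 0, 1$ handled by direct inspection of the four children of the root. A useful first reduction: in 2D, any two face-adjacent level-$L$ subsimplices automatically have opposite types (Bey refinement yields a triangular grid of alternating ``upward'' type-$0$ and ``downward'' type-$1$ triangles), so both (i) and (ii) reduce to showing that the given $S$ has some face-neighbor of strictly greater TM-index.

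For the inductive step I would split on which level-$1$ subtriangle of the root $T$ contains $S$. The three corner subtriangles $T_0, T_1, T_2$ are themselves type-$0$ roots, so the inductive hypothesis applied there produces the desired neighbor whenever $S$ lies in one of them (any type-$0$ subsimplex of such a $T_i$ is not a descendant of $T_3$, hence not covered by (ii), so only type-$1$ subsimplices of $T_i$ need treatment). The case $S \in T_3$ is the substantive one, since $T_3$ is type $1$ and the induction does not apply to it as a type-$0$ root. Within $T_3$ I would further peel off the subcase $S \in T_3.T_3$, which reduces to the inductive hypothesis applied to the type-$0$ subtree $T_3.T_3$.

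The remaining subcase is $S \in T_3.T_j$ with $j \in \{0,1,2\}$. Here I would walk up the ancestry of $S$ to the deepest ancestor $A$ whose Bey index in its parent equals $3$, so that $A$ is a ``middle'' with a parent of the opposite type; for (i) with $S$ of type $1$ this parent is type $0$, and for (ii) with $S$ a type-$0$ descendant of $T_3$ this parent is type $1$. In each case, at least one sibling of $A$ has strictly greater local SFC index than $A$ and shares an edge with $A$: for a type-$0$ parent, it is the corner $T_2$ (Bey index $2$, SFC local $3$), sharing the edge $[p_{02},p_{12}]$; for a type-$1$ parent, the siblings $T_1$ and $T_2$ (SFC locals $2$ and $3$) are both available through shared edges. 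Whenever some face of $S$ reaches such a shared edge, one obtains a level-$L$ face-neighbor of greater TM-index in that sibling. In the subcases where $S$'s Bey indices past $A$ are all corner labels and $S$ lies away from those shared edges, the required neighbor is produced instead by a middle sibling at an intermediate level: inside a type-$1$ parent the middle has local SFC index $1 > 0$, so a descendant whose last Bey labels are all $T_0$-corners finds its greater-TM neighbor through the face it shares with that intermediate middle.

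The main obstacle is precisely this $S \in T_3.T_j$ subcase. As noted just before the lemma, there is no clean reversal symmetry analogous to \eqnref{Rdef} available inside a type-$1$ triangle, so a direct inductive reduction on the type-$1$ subtree $T_3.T_j$ alone is unavailable. Instead one must keep simultaneous track of two distinct interfaces inside $T_3$ — the internal one between the corner $T_3.T_0$ (together with its type-$1$ descendants) and the type-$0$ middle $T_3.T_3$, and the external one between $T_3$ and the root-level sibling $T_2$ — and match each configuration of $S$'s boundary faces to the appropriate interface to exhibit the required greater-TM face-neighbor.
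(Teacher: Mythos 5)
Your reduction (face-adjacent 2D simplices have opposite types, so it suffices to exhibit \emph{some} face-neighbor of larger TM-index) is fine, but the induction you build on it has two concrete gaps. First, the subcase ``$S$ lies in the middle child of $T_3$'' does not reduce to the inductive hypothesis for item (ii): the hypothesis is the lemma itself for a type-0 root, and its second bullet only speaks about type-0 subsimplices descending from the \emph{middle child of that root}. A type-0 subsimplex sitting in a corner child of the middle child of $T_3$ (e.g.\ the child-$T_0$ of the child-$T_3$ of the child-$T_3$ of the root, labels in Bey order) is a descendant of $T_3$, so the lemma claims something about it, but it is covered neither by your appeal to the inductive hypothesis nor by your remaining subcase ``$S\subset T_3.T_j$, $j\in\{0,1,2\}$''. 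The statement you induct on is simply too weak for the recursion you set up; it would have to be strengthened, e.g.\ to ``every type-0 subsimplex whose neighbor across its diagonal face lies inside the root has a type-1 face-neighbor of larger TM-index''. Second, the dichotomy you describe for $S\subset T_3.T_j$ is not exhaustive. Take the root $[(0,0),(1,0),(1,1)]$ and the level-3, type-1 simplex obtained by the Bey labels $3,0,1$ (i.e.\ $S=[(0.5,0.125),(0.5,0.25),(0.625,0.25)]$). Its deepest middle ancestor is $A=T_3$ with type-0 parent, no face of $S$ reaches the interface between $T_3$ and the root's $T_2$ at height $0.5$, and its label chain past $A$ is not ``all $T_0$''; so neither of your two clauses produces a neighbor. (The correct neighbor is across $S$'s top face, landing in the middle child of $T_3$, which has local index $1>0$ -- a configuration your sketch does not generate.)

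The paper's proof avoids both problems by making the distinguished face part of the statement: for a type-1 simplex the neighbor is always the one across its \emph{top} face, for a type-0 simplex the one across its \emph{diagonal} face, the descendant-of-$T_3$ hypothesis serving only to guarantee that this diagonal neighbor exists inside the root. With the face fixed, a short induction over the level suffices: if $S$ and its neighbor $S'$ share a parent $P$, one reads off from Table~\ref{table:BeytoIndex} that $I_\mathrm{loc}(S)<I_\mathrm{loc}(S')$ for both $\type(P)=0$ and $\type(P)=1$; otherwise the two parents are themselves neighbors across the same kind of face with $\type(P)=1$ and $\type(P')=0$, so $m(P)<m(P')$ by induction, and since all descendants of $P$ precede all descendants of $P'$ in the TM order, $m(S)<m(S')$. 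Recasting your argument around this distinguished face (or an equivalently strengthened induction statement) is what is needed to close both gaps; as it stands, the proposal does not prove the lemma.
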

\begin{proof}
  The respective face-neighbor is the top face-neighbor for the type 1 subsimplex
  and the face-neighbor along the diagonal face for the type 0 subsimplex;
  see Figure~\ref{fig:faceconn2dhelp}.
  For type 0 we additionally require that the subsimplex is
  a descendant of $T_3$, since this ensures that the face-neighbor along the
  diagonal face is inside the root triangle.
  Despite this detail, the proofs for both items are identical, and we only
  present one for the first.

  Let $S$ denote an arbitrary type 1 subsimplex of level $L$ and let $S'$ be
  its neighbor across the top face.
  If $S$ and $S'$ share the same parent $P$ then there are two cases,
  which we also see in Figure~\ref{fig:haverkortSFC}:
  Either $\type(P) = 0$, then the local index of $S$ is 2 and that of $S'$ is
  3, or $\type(P) = 1$, in which case the local index of $S$ is 0 and that
  of $S'$ is 1.
  Thus, in both cases the TM-index of $S$ must be smaller than that of $S'$.
  We suppose now that $S$ and $S'$ have different parents, which implies
  $L\geq2$, and denote these different level $L-1$ subsimplices by $P$ and
  $P'$.
  The only possible combination is that $\type(P) = 1$ and $\type(P') = 0$, and
  that $P$ and $P'$
  are neighbors along $P$'s top face. Therefore, by an induction argument,
  $m(P) < m(P')$,
  and since the TM-index preserves the local order under refinement, each child
  of $P$ has a
  smaller TM-index than each child of $P'$.
  In particular we find $m(S)<m(S')$.
\end{proof}

\begin{figure}
  \center
  \def\svgwidth{0.8\textwidth}
  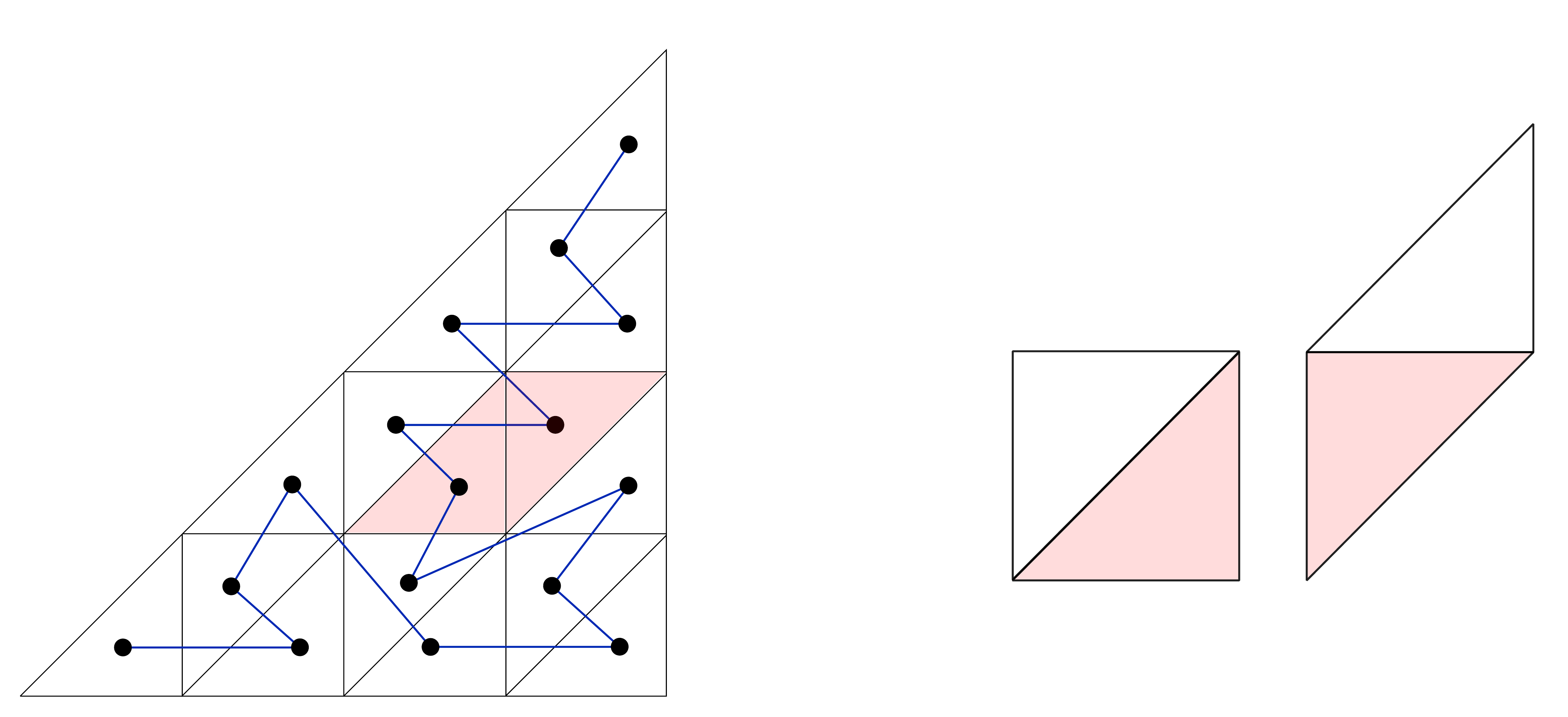
  \caption{Illustration of Lemma \ref{lem:faceconn2dhelp}.
   In 2D, choose any subsimplex $S_\ast$. If its neighbor along the top face
   $S_\ast'$ is inside the root triangle, then $m(S_\ast) < m(S_\ast')$.
   This condition is always fulfilled by any type 1 triangle and by type 0
   triangles that are descendants of the middle level 1 subtriangle.
  }
  \label{fig:faceconn2dhelp}
\end{figure}

Let us now show a 2D analogue to Proposition~\ref{illpropfirst}.

\begin{lem}
  \label{lem:faceconn2d}
 Consider a triangle $T$ that is uniformly refined to level $L$.
 If $T$ has type 0, then a contiguous segment of the SFC ending in the last
 level $L$ subsimplex has just one face-connected component.
 If $T$ has type 1, then this holds for segments starting in the first level
 $L$ subsimplex.
\end{lem}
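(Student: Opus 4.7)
The two halves of the lemma are equivalent under the reversal symmetry of Remark~\ref{rem:simplexsym}: reversing the SFC on a type-0 triangle yields the forward SFC on a type-1 triangle, and under this bijection ``ends at the last level-$L$ subsimplex'' becomes ``begins at the first level-$L$ subsimplex''. I would therefore prove only the type-0 statement, by induction on $L$, mirroring the divide-and-conquer template from Proposition~\ref{illpropfirst}. The base case $L=0$ is trivial; for $L=1$, the SFC order $T_0,T_1,T_3,T_2$ on the children of a type-0 parent places the central subtriangle $T_3$ between $T_1$ and $T_2$, so any segment ending at $T_2$ that contains $T_0$ or $T_1$ must also contain $T_3$.

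For the inductive step with $L\geq 2$, Table~\ref{table:BeytoIndex} gives the SFC order $T_0, T_1, T_3, T_2$ on the four children, and the last level-$L$ subsimplex sits inside $T_2$. Let $j\in\{0,1,2,3\}$ denote the SFC position of the child containing the segment's first level-$L$ subsimplex. If $j=3$, the segment lies in the type-0 subtree $T_2$ and the inductive hypothesis applies directly. Otherwise the segment splits into a partial piece in the $j$-th child ending at that subtree's last subsimplex, followed by one or more full subtrees, the last of which is always $T_2$.

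The key geometric observation is that the three corner subtriangles $T_0, T_1, T_2$ are pairwise only vertex-adjacent, whereas the central $T_3$ shares a full edge with each of them; thus $T_3$ must act as the bridge between pieces. For $j\in\{0,1\}$ the full $T_3$ lies in the segment, and the argument reduces to showing the partial piece face-connects to $T_3$. A short auxiliary induction on $L$ using Bey's refinement scheme places the last simplex of a type-0 subtriangle refinement at the corner corresponding to $x_2$ in that subtriangle (so at $x_{02}$ inside $T_0$ and at $x_{12}$ inside $T_1$), with two of that simplex's edges lying along the subtriangle's boundary edges meeting at the corner; one of these is precisely the edge shared with $T_3$, supplying the required face-neighbor in the full $T_3$. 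Each complete subtree in the decomposition is internally face-connected by the induction hypothesis.

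The main obstacle is the case $j=2$. The partial piece now lies in $T_3$ (type 1) and ends at $T_3$'s last subsimplex, and the tempting inductive statement ``any segment ending at the last level-$L$ subsimplex of a type-1 triangle is face-connected'' is in fact \emph{false}, since inside $T_3$ the corner subtriangles of the next level are themselves only vertex-adjacent. I would instead argue that the following full $T_2$ absorbs every component of the $T_3$-piece: using Lemma~\ref{lem:faceconn2dhelp}, from any subsimplex $S$ in the $T_3$-piece one builds a chain of face-adjacent subsimplices with strictly increasing TM-index. Greater-index simplices necessarily lie in $T_3\cup T_2$, and the chain can only terminate at a type-0 simplex not descended from $T_3$, which forces it to enter $T_2$. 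Hence every simplex of the $T_3$-piece face-connects, within the segment, to the internally connected full $T_2$, and the entire segment is a single face-connected subdomain.
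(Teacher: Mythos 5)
Your proof is correct and follows essentially the same route as the paper: reduce type 1 to type 0 via the reversal symmetry of Remark~\ref{rem:simplexsym}, induct on $L$ over the level-1 child containing the segment's start, and resolve the critical case where the start lies in the type-1 middle child by chaining, via Lemma~\ref{lem:faceconn2dhelp}, every simplex of that piece through face-neighbors of strictly increasing TM-index into the full last subtree. The only difference is one of explicitness: where the paper appeals to ``the same argument as in Proposition~\ref{illpropfirst}'' for the cases $j\in\{0,1,3\}$, you supply the geometric bridge (the last subsimplex of a type-0 corner child sits at its $x_2$-corner with an edge on the face shared with the middle child $T_3$), which elaborates but does not alter the paper's argument.
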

\begin{proof}
  We only show the statement for $\type (T)=0$, since we can then use
  the symmetry of the 2D curve (Remark \ref{rem:simplexsym}) to obtain
  the result for the case $\type (T)=1$.
  We proceed by induction over $L$.

  For $L=0$ there is only one possible segment and it is connected.
  For $L>0$, let $j\in\{0,1,2,3\}$ be the local index of the level
  1 subtree $T'$ of $T$ in which the first level $L$ subsimplex of the segment lies.
  If $j\in\{0,1,3\}$, then the type of $T'$ is 0 and the statement follows
  by induction with the same argument as in the proof of
  Proposition~\ref{illpropfirst}.
  Thus, let $j=2$, i.e., the segment starts in the type 1 subtree of $T$.
  The part of the segment that is not inside $T'$ is the full last subtree of
  $T$ (local index 3) and thus it is face-connected.
  With Lemma \ref{lem:faceconn2dhelp} we conclude that each subsimplex in the
  subsegment in $T'$ is face-connected to a simplex with greater TM-index.
  Since this holds particularly for the last subsimplex in $T'$, the subsegment
  is also face-connected to the subsegment making up the last subtree.
  Thus, the whole segment is face-connected.
\end{proof}

For all other segments beginning with the first or ending in the last level $L$
subsimplex, and notably for all of those segments in 3D, we obtain an upper
bound of $L+1$ face-connected components, which we show in the next two lemmas.


\begin{lem}
\label{lem:faceconnlem0}
 Let a segment of the space-filling curve for a uniform level $L$ refined $d$-simplex
 consist of several full level 1 subsimplices plus
 one single
 level $L$ simplex at the end or at the beginning, then this segment has at
 most two face-connected components.
\end{lem}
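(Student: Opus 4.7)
The plan is to treat the case where the single level-$L$ simplex $s$ lies at the \emph{end} of the segment; the case at the beginning follows by an analogous argument (which in 2D can be partly reduced using Remark~\ref{rem:simplexsym}, while in 3D it requires a parallel case analysis). I write the segment as
\[
  P_i\cup P_{i+1}\cup\cdots\cup P_j\cup\{s\},
\]
where $P_i,\ldots,P_j$ are the consecutive full level-1 subsimplices of the root simplex $T_d^0$ in SFC order, and $s\in P_{j+1}$ is the first level-$L$ descendant of the next level-1 subsimplex $P_{j+1}$ in SFC order.

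My first step is to bound the number of face-connected components of the union $U:=P_i\cup\cdots\cup P_j$. Because each $P_\ell$ is complete, this reduces to a pure level-1 question: counting face-connected components of a consecutive SFC range among the $2^d$ Bey children of the root simplex. Reading off pairwise face-adjacencies from Bey's refinement rule~\eqref{eq:childnumbers} and the SFC orderings from the permutations $\sigma_b$ in Table~\ref{table:BeytoIndex}, a finite case enumeration (two parent types in 2D, six in 3D) verifies that any consecutive SFC range at level 1 decomposes into at most two face-connected pieces, and therefore $U$ has at most two face-connected components.

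My second step is to show that adjoining $s$ does not create a third component. Since SFC position~$0$ always coincides with Bey's corner child $T_0$, the simplex $s$ sits at the $x_0$-vertex of $P_{j+1}$; because $T_0$ is the unique level-1 subsimplex whose $x_0$-vertex is an original root vertex and it occupies SFC position~$0$, the existence of $P_j$ before $P_{j+1}$ forces $x_0^{(P_{j+1})}$ to be an edge midpoint of the root. The $d$ faces of $s$ incident to this corner lie on the $d$ faces of $P_{j+1}$ meeting at the corner, and each such face is either shared with another level-1 subsimplex or lies on the root boundary. Consequently $s$ shares a full $(d-1)$-face with a level-$L$ subsimplex inside every level-1 face-neighbor of $P_{j+1}$ at this corner. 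A final case-by-case check, over the two-component configurations of $U$ identified in the first step, verifies that at least one such face-neighbor of $P_{j+1}$ belongs to $\{P_i,\ldots,P_j\}$; hence $s$ is face-adjacent to at least one component of $U$, and the total number of face-connected components stays at most two.

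The hardest part will be the case enumeration in 3D, where the six parent types $b=0,\ldots,5$ together with Bey's fixed octahedral diagonal $[x_{02},x_{13}]$ yield distinct SFC orderings and face-adjacency patterns among the corner and middle tetrahedra that must each be checked individually. A more structural, dimension-independent argument would presumably require extending Lemma~\ref{lem:faceconn2dhelp} to arbitrary dimensions, which does not seem to follow directly from the Bey and TM construction.
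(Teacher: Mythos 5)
Your reduction breaks at Step 1, and it breaks in 3D. It is not true that every SFC-consecutive range of full level-1 children has at most two face-connected pieces. Take a parent of type $4$: by Table~\ref{table:BeytoIndex} its children in SFC order are $T_0,T_4,T_5,T_1,T_6,T_2,T_7,T_3$ (Bey numbering of \eqref{eq:childnumbers3d}). The consecutive children at SFC positions 3--5 are $T_1=[x_{01},x_1,x_{12},x_{13}]$, $T_6=[x_{02},x_{03},x_{13},x_{23}]$ and $T_2=[x_{02},x_{12},x_2,x_{23}]$, which intersect pairwise only in a vertex or an edge ($T_1\cap T_6=\{x_{13}\}$, $T_1\cap T_2=\{x_{12}\}$, $T_6\cap T_2$ is the edge from $x_{02}$ to $x_{23}$). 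So the union $U$ of these three full subtrees has \emph{three} face-connected components, and prepending $T_5$ (position 2) still gives three, since $T_5$ is face-adjacent only to $T_1$ among them. Consequently your announced enumeration cannot come out the way you assert, and your Step 2 is calibrated to the wrong situation: you only check, for two-component configurations of $U$, that the single simplex $s$ is adjacent to \emph{one} full subtree, whereas in the configurations above $s$ must merge \emph{two} otherwise isolated components. Whether it does depends on exactly which geometric corner of the neighboring level-1 subtree carries the first (resp.\ last) level-$L$ descendant, which is governed by that subtree's type-dependent vertex ordering and not only by the fact that local index $0$ (resp.\ $2^d-1$) is always Bey child $T_0$ (resp.\ $T_3$). (This configuration even strains the unanchored reading of the statement itself: for $L=1$, full children at positions 2--4 of a type-4 parent plus the single child at position 5 appear to form three components, so the enumeration should be organized around the anchored situations in which the lemma is actually invoked.)

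For comparison, the paper proves the lemma by a direct case enumeration in the spirit of the last paragraph of the proof of Proposition~\ref{illpropboth}, and it uses the lemma either for runs of full subtrees anchored at the first or last child of their parent (proof of Lemma~\ref{lem:faceconnlem1}) or inside a type-0 parent (proof of Proposition~\ref{prop:faceconncomp}); in those settings the two-component conclusion survives, precisely because one checks, ordering by ordering, that the extra level-$L$ simplex sits at a corner whose incident interfaces reach the subtrees that would otherwise be isolated. Your Step 2 ingredients are sound and worth keeping: SFC position $0$ is always Bey child $T_0$ and position $2^d-1$ always $T_3$, and conformity of uniform red refinement does give a full $(d-1)$-face match between $s$ and a level-$L$ simplex in each level-1 neighbor across a corner-incident interface. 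But to repair the argument you must drop the false ``$U$ has at most two components'' shortcut, determine per parent type the actual corner of the adjacent subtree at which the first/last descendant lies, and redo the case check allowing $U$ to have three components that the single simplex is required to reconnect. Your 2D analysis is unaffected (there the star-shaped adjacency around the middle child does give at most two components for any consecutive range), so the gap is confined to, and decisive for, the 3D case you yourself identified as the hard part.
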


\begin{proof}
  Similarly to the last paragraph in the proof of Proposition \ref{illpropboth},
  and in analogy to \figref{threedcases}, we can show this claim by enumerating
  all possible cases.
\end{proof}
\begin{lem} 
\label{lem:faceconnlem1}
 If a $d$-simplex is uniformly refined to level $L$, then any
 segment of the space-filling curve ending in the last subsimplex or starting in
 the first has at most $L+1$ face-connected components.
\end{lem}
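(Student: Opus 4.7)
The plan is induction on $L$, treating both cases of the lemma (``segment ending in the last subsimplex'' and ``segment starting in the first subsimplex'') in parallel via completely symmetric arguments. The base case $L = 0$ is immediate since only a single element is present.

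For the inductive step, consider a segment $G$ ending in the last level-$L$ subsimplex of the root; the other case is handled by the analogous argument, swapping ``first'' and ``last'' and replacing the subtrees of index greater than $j$ by those of index less than $j$. Let $j$ be the smallest local index of a level-1 subtree containing any simplex of $G$. If $j = 2^d - 1$, then $G$ lies entirely in the last level-1 subtree, which is refined to level $L-1$, and $G$ ends at its last subsimplex; the inductive hypothesis gives at most $L$ face-connected components.

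Otherwise, decompose $G = P \cup F$, where $P$ is the partial subsegment in subtree $j$ (which necessarily ends at the last subsimplex of subtree $j$, since $G$ continues into subtree $j+1$) and $F$ consists of the full level-1 subtrees with local indices $j+1, \ldots, 2^d - 1$. By the inductive hypothesis, $P$ has at most $L$ face-connected components. Let $s$ denote the last simplex of $P$. The key step is that $\{s\} \cup F$ fits the hypothesis of Lemma~\ref{lem:faceconnlem0} exactly --- ``several full level-1 subtrees plus one single level-$L$ simplex at the beginning'' --- so $\{s\} \cup F$ has at most two face-connected components. It follows that at most one face-connected component of $F$ (viewed as a subgraph on its own) fails to be face-connected to $s$; since $s \in P$, \emph{a fortiori} at most one face-connected component of $F$ fails to touch any simplex of $P$.

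To conclude, apply the elementary graph-theoretic inequality that the number of face-connected components of $P \cup F$ is bounded by the number of components of $P$ plus the number of components of $F$ that do not touch $P$ (every $F$-component touching $P$ merges with an existing $P$-component and hence contributes nothing new). Since $P$ has at most $L$ components and at most one component of $F$ is isolated from $P$, the total is at most $L + 1$, as desired. The main obstacle is the last bookkeeping step: one must carefully extract from Lemma~\ref{lem:faceconnlem0} the consequence about face-connected components of $F$ alone, upgrade ``isolated from $s$'' to ``isolated from all of $P$'' by a short monotonicity argument, and justify that cross-connections between $P$ and $F$ can only merge components rather than create new ones.
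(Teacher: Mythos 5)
Your proof is correct and follows essentially the same route as the paper's: induction over $L$, splitting the segment into the partial piece inside one level-1 subtree (at most $L$ components by the inductive hypothesis, since that subtree is refined only $L-1$ further levels) and the adjacent run of full level-1 subtrees, then applying Lemma~\ref{lem:faceconnlem0} to the full subtrees together with the single boundary level-$L$ simplex to conclude that at most one additional component can arise. Your explicit graph-theoretic bookkeeping at the end merely spells out the paper's count $L+2-1=L+1$, so there is no substantive difference in approach.
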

\begin{proof}
  Consider the case that the segment starts in the first simplex.
  For $L=0$ there is only one possible segment consisting of the unique level
  0 subsimplex and it is thus connected.
  Let now $L>0$.
  Since the segment begins at the very first level $L$ subsimplex, we can
  separate it into two parts.
  The first part at the beginning consists of 0 to $2^d-1$ full level 1 subtrees,
  and the second part is one possibly incomplete level 1 subtree.

  By the induction assumption, the second part has at most $L$ face-connected
  components.
  From Lemma \ref{lem:faceconnlem0} we obtain that the first part
  together with the first level $L$ subsimplex of the second part has at most
  two face-connected components.
  Since this first level $L$ subsimplex is contained in one of the components
  of the second part, we obtain
  \begin{equation}
    L + 2 - 1 = L + 1
  \end{equation}
  components in total.

  If the segments ends in the last simplex, the order of parts is reversed.
  The first part of the segment is the part in the level 1 subtree where
  the segment starts, and the second part consists of the remaining full
  level 1 subtrees.
  We obtain the bound on the number of face-connected components using the same
  inductive reasoning as above.
%
\end{proof}

We have so far argued the connectivity of specific kinds of SFC segments.
This suffices to proceed to arbitrary segments of the tetrahedral Morton SFC.
\begin{prop}
\label{prop:faceconncomp}
 Any contiguous segment of the space-filling curve of a uniform level $L\geq2$
 refinement of a type 0 simplex has at most $2(L-1)$ face-connected components in 2D and
 $2L+1$ face-connected components in 3D.
 For $L=1$, there are at most two face-connected components
 (this applies to both 2D and 3D).
\end{prop}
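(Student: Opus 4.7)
The plan is to adapt the inductive divide-and-conquer argument of Proposition~\ref{illpropboth} to the simplicial setting, substituting the lemmas established earlier in this subsection (Lemmas~\ref{lem:faceconn2dhelp}--\ref{lem:faceconnlem1}) for the stronger cubical ingredients Proposition~\ref{illpropfirst} and Corollary~\ref{illcorlast}. We induct on $L$. The base case $L=1$ reduces to a finite enumeration inside a single level-$1$ refinement, containing $2^d$ subsimplices, where one checks directly that each contiguous SFC segment splits into at most two face-connected components. For $L\ge 2$, we fix a contiguous segment $S$ of the SFC in a uniform level-$L$ refinement of a type-$0$ root simplex.

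If $S$ lies entirely in a single level-$1$ subtree, that subtree is itself a uniform level-$(L-1)$ refinement and the inductive hypothesis closes the case; in 2D, the reversal symmetry of Remark~\ref{rem:simplexsym} reduces a type-$1$ subtree to a type-$0$ one, while in 3D we run the induction in parallel for all six possible root types. Otherwise, we decompose $S = H \cup M \cup T$, where $H$ is the portion of $S$ in the first nonempty level-$1$ subtree $A_i$ (hence ending at the last level-$L$ subsimplex of $A_i$), $M$ is the union of the fully covered middle level-$1$ subtrees, and $T$ is the portion in the last nonempty level-$1$ subtree $A_j$ (hence starting at the first level-$L$ subsimplex of $A_j$). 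The number of components of $H$ is bounded by Lemma~\ref{lem:faceconn2d} (exactly one, in 2D when $A_i$ has type~$0$) or by Lemma~\ref{lem:faceconnlem1} (at most $L$ in general), and we bound $T$ symmetrically. By Lemma~\ref{lem:faceconnlem0}, the union of $M$ with the last subsimplex of $H$ has at most two face-connected components, and likewise for $M$ together with the first subsimplex of $T$; this forces $M$ to merge with the ends of $H$ and $T$ whenever it contains more than one component of its own. Summing the contributions and identifying the overlapping boundary subsimplices across the interfaces yields the target bounds $2(L-1)$ in 2D and $2L+1$ in 3D, with Lemma~\ref{lem:faceconn2dhelp} supplying the extra cross-type face-connections that the 2D count requires.

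The hard part is the case analysis, especially in 2D where the bound $2(L-1)$ is tight and sensitive to the types of the level-$1$ children: from Table~\ref{table:BeytoIndex} a type-$0$ root has three type-$0$ children and one type-$1$ child, so one must track, for each admissible pair $(i,j)$ of first and last nonempty subtrees, whether Lemma~\ref{lem:faceconn2d} applies to the head and tail or only the weaker Lemma~\ref{lem:faceconnlem1} does, and which of the face-connections of Lemma~\ref{lem:faceconn2dhelp} are available to glue components across the interface. In 3D there are six possible child types and the looser bound $2L+1$ leaves slack, so the enumeration is less delicate but combinatorially larger; it is organized using Table~\ref{table:BeytoIndex} and direct geometric inspection of the handful of remaining critical cases.
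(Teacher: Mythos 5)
Your skeleton matches the paper's: the same head/middle/tail decomposition by level-$1$ subtrees, with Lemma~\ref{lem:faceconnlem1} bounding the two end pieces by $L$ each and Lemma~\ref{lem:faceconnlem0} forcing the middle block to contribute at most one extra component, which gives $2L+1$ for $d\le 3$. That part of your plan is sound and is essentially the paper's first step.

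The genuine gap is the 2D sharpening to $2(L-1)$, which you dispose of with ``summing the contributions and identifying the overlapping boundary subsimplices across the interfaces yields the target bounds'' and then defer to an unexecuted case analysis. That is precisely the step that needs an argument, and Lemma~\ref{lem:faceconn2dhelp} alone does not supply it: the greater-TM-index neighbor it produces may lie inside the same subsegment (merging nothing new) or outside the segment altogether, so it does not by itself glue components of the head to components of the tail. The paper's route is concrete: by Lemma~\ref{lem:faceconn2d}, the head can have more than one component only if it sits in the unique type-$1$ level-$1$ subtree (SFC position $2$), and the tail only if it sits in a type-$0$ subtree; if both happen, the tail must occupy the last subtree (position $3$) and the middle block $M$ is empty, so the crude count is at most $L+L$. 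In this pinned-down configuration one then needs a specific geometric fact (Figure~\ref{fig:explaintrianglprop}): the first two level-$L$ triangles of the head are face-connected to the first two level-$L$ triangles of the tail, and if the tail itself has $L$ components, its last two triangles are face-connected to the last two of the head; this merges two pairs of components and yields $L+L-2=2(L-1)$. Your proposal never isolates this worst-case configuration (in particular that it forces $M=\emptyset$), and never establishes why exactly two components can be subtracted rather than one, so as written it proves only a bound of roughly $2L$ in 2D, not the claimed $2(L-1)$. (Minor point in your favor: your explicit handling of the ``segment inside one level-$1$ subtree'' case via Remark~\ref{rem:simplexsym} in 2D and via all child types in 3D is more careful than the paper's one-line appeal to induction.)
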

\begin{proof}
 We first show that for $d \le 3$ the number of face-connected
 com\-po\-nents is bounded by $2L+1$:
 If a given segment is contained in a level 1 subtree, we are done by induction.
 Otherwise we can divide the segment into three (possibly empty) pieces:
 First,
 the segment in one incomplete level 1 subtree ending at its last level $L$
 subsimplex,
 then one contiguous segment of full level 1 subtrees
 and finally a segment in one (possibly incomplete) level 1 subtree that
 starts at its first level $L$ subsimplex.
 Lemma~\ref{lem:faceconnlem1}
 implies that the first and the last piece have at most $L$ face-connected
 components each.
 By Lemma~\ref{lem:faceconnlem0}, the second piece has one or two
 face-connected components, and if the number is two, then it is face-connected
 to the first or to the third piece.
 Thus, it adds only one face-connected component to the total number, and we
 obtain at most
 \begin{equation}
  L + 1 + L = 2L + 1
 \end{equation}
 face-connected components.

Let us now specialize to 2D.
We conclude from Lemma \ref{lem:faceconn2d} that
the first subsegment only adds more than one face-connected component if
it is contained in the only level 1 subtree of type 1 (local index 2).
Similarly, the second subsegment only adds more than one face-connected component
if it is contained in a level 1 subtree of type 0.
In particular, if both subsegments add more than one connected component, the
second subsegment is contained in the last level 1 subtree (local index 2).
Thus, the middle subsegment is empty in this case.

If both of these subsegments have less than $L$ face-connected components, there
is nothing left to show since the overall number of components is then less
than or equal to $2(L-1)$.
So suppose that one of the subsegments has $L$ face-connected components and the other one
has at least $L-1$.
We depict this situation in Figure~\ref{fig:explaintrianglprop}.
We observe that the first and second level $L$ simplex in this first segment
are face-connected to the first and second level $L$ simplex in the second segment.
If, however, the second subsegment has $L$ connected components then its last two 
level $L$ simplices are face-connected to the last two level $L$ simplices of the first 
subsegment.

We thus can subtract two connected components from the total count, which leads
to at most
\begin{equation}
  L + L - 2 = 2(L-1)
\end{equation}
face-connected components in total.
%
%
\end{proof}

\begin{figure}
  \center
  \includegraphics[width=0.6\textwidth]{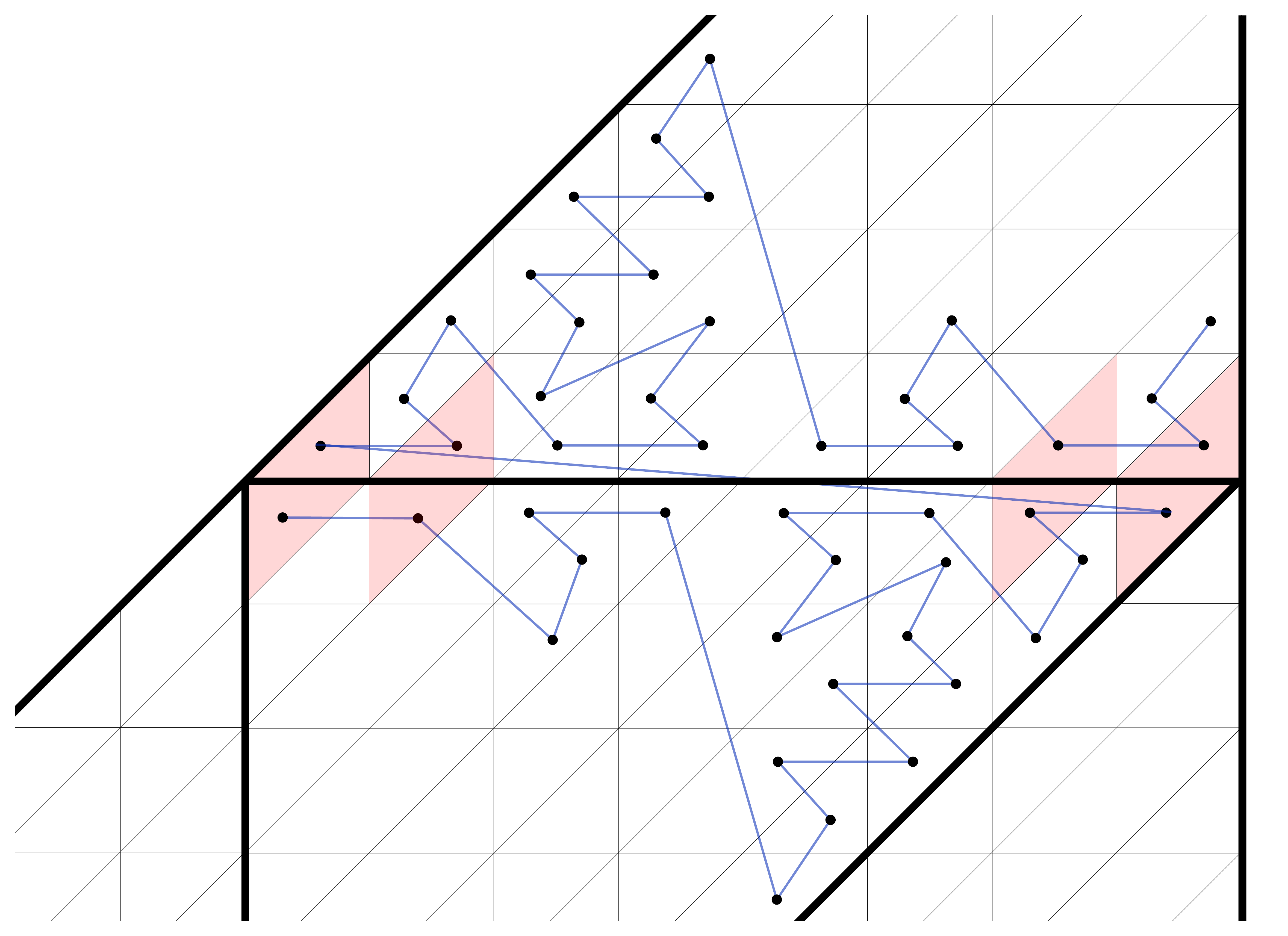}
  \caption{An illustration of the 2D case in the proof of Proposition \ref{prop:faceconncomp} for 
  $L=4$. The bottom segment has the maximal number of $L$ face-connected components. 
  Since its first and second triangle (on the left, shaded in pink) are connected with the
  top segment, the possible number of connected components is reduced by two.
  If however, the second segment has $L$ face-connected components then its last two
  triangle (on the right) are connected with the bottom segment.
  Thus, the number of face-connected components is less than or equal to $2L-2$.}
  \label{fig:explaintrianglprop}
\end{figure}

We briefly discuss whether we can sharpen these bounds.
In 2D, this is not possible by counterexample; see Figure~\ref{fig:face_conn_ex}.
In 3D, we construct a segment with $2L$ face-connected components using the
SFC-indices 22--25 of a uniform level 2 refinement of a type 0
tetrahedron.
We believe that the case that the first and the last piece described in
the proof of Proposition \ref{prop:faceconncomp} have $L$ face-connected
components each and that additionally the middle piece adds one component
does not occur.
\begin{conj}
\label{con:3TMconjecture}
 In 3D, the number of face-connected components is bounded by $2L$.
 This estimate is sharp.
\end{conj}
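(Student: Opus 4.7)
My plan for Conjecture~\ref{con:3TMconjecture} attacks the upper bound $2L$ and the sharpness claim separately. The starting point for the upper bound is the slack in Proposition~\ref{prop:faceconncomp}: the estimate $2L+1$ is tight only if the first piece contributes exactly $L$ face-connected components, the last piece contributes $L$ as well, and the middle block of full level~1 subtrees adds one further component that is not merged with either neighbour. The strategy is to rule out this triple coincidence in 3D, which would already yield the improved bound $2L$.

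The technical core would be a 3D strengthening of Lemma~\ref{lem:faceconn2dhelp}. For each of the six tetrahedral types $b$ and each child slot, I would tabulate from Table~\ref{table:BeytoIndex} together with Bey's enumeration~\eqref{eq:childnumbers3d} which face-neighbours of that child carry a strictly larger or strictly smaller TM-index. Inductively, this identifies, for every level $L$ subsimplex $S$, a canonical face-neighbour $S'$ with $m(S')>m(S)$ whenever $S'$ lies inside the root simplex. I would then prove a 3D analogue of Lemma~\ref{lem:faceconn2d}, or at least the following weaker statement sufficient for our purposes: if a segment ending in the last level $L$ subsimplex of a level 1 subtree $P$ realises the full bound of $L$ face-connected components from Lemma~\ref{lem:faceconnlem1}, then its last subsimplex is face-connected, through a specific face of $P$, to the first level $L$ subsimplex of the adjacent level 1 subtree with greater TM-index. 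A symmetric statement would apply to segments starting in the first level $L$ subsimplex. Together, these force the middle block to merge with either the first or the last piece whenever both attain $L$ components, giving the total $L+0+L=2L$.

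The main obstacle is the combinatorial breadth of the case analysis: six parent types, eight children each, and several candidate geometric faces to check for the connecting neighbour. Rather than a raw hand enumeration, I would aim for a structural pattern, identifying a single geometric face (depending only on $b$ and on the position of $S$ in the SFC ordering of its parent) that always supplies the face-connection $S\to S'$. Should no such uniform choice exist, a computer-verified enumeration over the finite table remains a realistic fallback; this also seems like the most plausible route to a rigorous 3D version of Lemma~\ref{lem:faceconn2dhelp}.

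For sharpness, the plan is to generalise the level 2 example with SFC indices 22--25 by recursion. Given a segment at level $L$ realising $2L$ face-connected components, I would replace one of its boundary level $L$ subsimplices by its eight level $L+1$ children in Bey's order and extend the segment locally by the analogous pattern. The claim is that this adds exactly two new components while leaving the previously counted components unaffected; an induction on $L$ then yields the $2L$ lower bound for all $L\geq 2$. Verifying the \emph{exactly two} assertion reduces to a local check at the refinement interface, similar in flavour to Lemma~\ref{lem:faceconn2dhelp} and feasible by inspecting the finite list of neighbour types.
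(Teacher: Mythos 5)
You should first be aware that the paper does not prove this statement at all: it is deliberately left as Conjecture~\ref{con:3TMconjecture}. The authors only exhibit the sharpness witness (the segment with SFC indices 22--25 in a uniform level 2 refinement of a type 0 tetrahedron, four pairwise non-face-adjacent tetrahedra, i.e.\ $2L=4$ components) and state the \emph{belief} that the extremal configuration of Proposition~\ref{prop:faceconncomp} (first piece with $L$ components, last piece with $L$ components, nonempty middle block adding one) does not occur. Your text does not close this gap either: every decisive step --- the 3D analogue of Lemma~\ref{lem:faceconn2dhelp}/\ref{lem:faceconn2d}, the ``forced merge'' statement, and the ``exactly two new components'' claim in the sharpness recursion --- is announced as a goal, with the actual case analysis over the six types deferred to a possible computer enumeration. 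As it stands this is a research plan, not a proof.

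Moreover, the central lemma you propose is false as phrased, and the counterexample is the paper's own witness. In the 22--25 example, the sub-piece $\{22,23\}$ lies in one level 1 subtree, ends in the \emph{last} level 2 subsimplex of that subtree, and attains the full bound $L=2$ of Lemma~\ref{lem:faceconnlem1}; yet subsimplex 23 is \emph{not} face-connected to subsimplex 24, the first subsimplex of the SFC-successor subtree --- otherwise the example would have at most three components instead of four. So ``the first piece attains $L$ components'' cannot by itself force a connection into the next subtree; any correct version must exploit the additional hypothesis that the middle block of full level 1 subtrees is nonempty (or some other structural restriction), and identifying and verifying that refined statement is exactly the open work. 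Two further cautions: in 3D there is no analogue of the reversal symmetry of Remark~\ref{rem:simplexsym}, so the ``symmetric statement'' for segments starting in the first subsimplex needs an independent argument; and the sharpness recursion is underspecified, since in a uniform level $L+1$ refinement all simplices are refined and one must instead choose how to trim or extend the index range at both ends and then verify, against \eqref{eq:childnumbers3d} and Table~\ref{table:BeytoIndex}, that exactly two components are gained and none of the old ones merge.
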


\section{Proofs for arbitrary dimension}
\seclab{arbitrary}

The construction of the cubical Morton curve generalizes readily to
arbitrary space dimension $d$.
We should suppose that the main result (that any contiguous segment consists of
at most two face-connected subdomains) generalizes as well.
Indeed, we propose two different ways to prove this in the following
Sections~\ref{sec:illustrated-arbitrary} and \ref{sec:nonrecursiveproofs}.
The first is closer to the geometric approach we have been using in
\secref{illustrated}, while the second is more formal and paves the way for
quantitative studies of the frequency of disconnections in \secref{enumeration}
below.

We close this section with the extension of the proofs for both cubical and
tetrahedral curves from uniform to adaptive meshes (see \secref{main}), which
is the remaining step to establish Theorems~\ref{illthmallcube} and
\ref{illthmalltets},
and discuss implications for a forest of octrees (\secref{forest}).

\subsection{Induction proofs for $d$-cubes}
\seclab{illustrated-arbitrary}

We use induction over both the dimension and the level of subdivision to
prove the main statement for all dimensions $d > 0$.
These proofs imply the statements of the previous \secref{illustrated-cubical} as
special cases.  For convenience we denote any $d$-tant as a quadrant.  We make
use of the following definition of subtree ranges.
\begin{dfn}
  \label{defintervals}
  Let the space dimension be $d > 0$.  For any $0 \le d' \le d$ and $0 \le k <
  2^{d-d'}$, we define the following interval containing $2^{d'}$ integers,
  \begin{equation}
    \eqnlab{Idk}
    \Idk = 2^{d'} [ k, k + 1 ) .
  \end{equation}
  We use this interval to denote a specific contiguous range of subtree
  indices.
\end{dfn}
We define the following auxiliary statements, first considering a one-sided
segment and then a general two-sided one.
\begin{prop}
  \label{arboneended}
  If a segment of a Morton curve is fully contained in the level 1 subtrees
  enumerated by a given $\Idk$ and contains the first or last subquadrant in
  this range of subtrees, then it corresponds to one face-connected subvolume.
\end{prop}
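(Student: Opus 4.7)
The plan is to prove Proposition~\ref{arboneended} by a nested induction: an outer induction on the refinement level $L$ and an inner induction on the dimension parameter $d'$ running from $0$ up to $d$. The outer base $L=0$ is trivial since every segment then consists of a single root quadrant. For $L > 0$, I would perform the inner induction, carrying the outer hypothesis.

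The inner base $d'=0$ is essentially a change of viewpoint: here $\Idk$ consists of a single level-$1$ subtree index, so the whole segment lies inside that subtree. Regarding the subtree as a $d$-cube refined to level $L-1$, the segment again begins or ends with that subtree's first or last level-$(L-1)$ subquadrant, and the outer induction hypothesis (applied with $d' = d$, $k = 0$ inside this subtree) supplies face-connectedness.

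For the inner step $d' > 0$, I would split $\Idk = \Idmk \cup \Idmkp$. Morton order visits all of $\Idmk$ before any of $\Idmkp$, so a segment starting at the first subquadrant of $\Idk$ either stays inside $\Idmk$ (apply the inner induction hypothesis directly) or covers $\Idmk$ completely and continues into $\Idmkp$, entering it at its first subquadrant (apply the inner hypothesis to each half, yielding two face-connected pieces). For a segment ending at the last subquadrant of $\Idk$, argue symmetrically, or invoke the global reversal symmetry~\eqnref{Rdef} to reduce to the first-subquadrant case.

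The crux is the cross-face argument when the segment spans both halves. The two halves $\Idmk$ and $\Idmkp$ differ exactly in the $d'$-th bit of the level-$1$ index, so they are stacked along the coordinate direction $x_{d'}$ and meet across a $(d-1)$-face. The first level-$L$ subquadrant of $\Idmkp$ sits at the Morton-zero corner of its containing level-$1$ subtree, so its $x_{d'}$-coordinate is flush against the interface with $\Idmk$; flipping its $d'$-th coordinate bit produces a face-neighbor sitting at the opposite $x_{d'}$-boundary of the first level-$1$ subtree of $\Idmk$, which is fully covered by assumption. Hence the two pieces are face-connected. The main obstacle will be making the bit-pattern bookkeeping airtight and handling the first/last dichotomy uniformly, so that the induction closes cleanly on both sides without an auxiliary lemma for each case.
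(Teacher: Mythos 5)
Your proposal is correct and shares the paper's skeleton---reduction to the first-subquadrant case via the reversal symmetry \eqnref{Rdef}, a nested induction over $L$ and $d'$, and the split of $\Idk$ into $\Idmk$ and $\Idmkp$---but your gluing step differs from the paper's in a real way. The paper first proves the level-$1$ case by a separate chain argument (clearing the low bits of a subtree index one at a time shows every subtree in $\Idk$ is face-connected to subtree $k2^{d'}$), and then in the inductive step hooks \emph{each} nonempty subtree $j$ of $\Idmkp$ to its full counterpart $j-2^{d'-1}$ in $\Idmk$, reusing that chain lemma to connect $\Idmk$ internally. You instead apply the inner ($d'$) hypothesis to both halves---the fully covered $\Idmk$ and the contiguous tail in $\Idmkp$, which starts at the first subquadrant of that range---and join them by a single face connection at the interface in direction $x_{d'}$; this avoids a separate $L=1$ base case and the per-subtree connections, at the cost of making that one interface argument carry everything. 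The argument is sound: the first level-$L$ subquadrant of $\Idmkp$ is flush against the interface, and its face-neighbor in the $-x_{d'}$ direction lies in the first (full) subtree of $\Idmk$, hence in the segment. Only fix the phrase ``flipping its $d'$-th coordinate bit'': the neighbor is obtained by decrementing the coordinate $Q_{d'}$ from $2^{L-1}$ to $2^{L-1}-1$, which flips all $L$ bits of that coordinate; flipping only the level-$1$ bit $q^1_{d'}$ would land you at the Morton-zero corner of subtree $k2^{d'}$, which is not a face-neighbor once $L\ge 2$. With that wording made precise, your induction closes as claimed.
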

\begin{proof}
  By symmetry of the Morton curve, we can restrict the discussion to the case
  of the first subquadrant.  Let us begin by proving the statement for
  subdivision level $L = 1$.  By \eqnref{Idk} the lowest subtree index in the
  segment is $k2^{d'}$.  This number has $d'$ zero bits from the right.  All
  other indices in $\Idk$ have one or more ones in the lower $d'$ bits while
  being bitwise identical in the higher bits.  For any of these indices we can
  flip the low bits to zero one by one, effectively transitioning through face
  neighbors and monotonously decreasing the index until we reach $k2^{d'}$.
  This whole sequence of face-connected subtrees is contained in $\Idk$.  In
  conclusion, all trees in $\Idk$ are face-connected to $k2^{d'}$ and thus to
  each other.

  Now let $L > 1$ and assume the above statement for $L - 1$.  To prove it for
  $L$ we make an induction over $d'$.  If $d' = 0$ we have a single subtree and
  can readily invoke the induction assumption for $L - 1$.  Else there are two
  possible cases: Either the segment is fully contained in one of $\Idmk$ or
  $\Idmkp$ and we apply the induction over $d'$.  Otherwise $\Idmk$ contains
  full subtrees only and the segment reaches into $\Idmkp$.  Each nonempty
  subtree $j$ in the latter interval must contain its first subquadrant, which
  has a face connection to the full tree $j - 2^{d'-1} \in \Idmk$.  Since by
  the proof for $L = 1$ all subtrees in $\Idmk$ are face-connected, we are
  done.
\end{proof}
\begin{prop}
  \label{arbtwoended}
  If a segment of a Morton curve is contained in the level 1 subtrees $\Idk$,
  it produces no more than two distinct face-connected subvolumes.
\end{prop}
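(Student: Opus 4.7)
The plan is to repeat the divide-and-conquer structure that worked for Proposition~\ref{illpropboth} but to replace the enumeration of cases (which only scales to small $d$) with a nested induction: an outer induction on the refinement level $L$ and an inner induction on the width parameter $d'$ of the subtree range $\Idk$. The base case $L = 0$ is immediate since the segment consists of at most one element. The inner base case $d' = 0$ says that $\Idk$ is a single level-1 subtree, so the entire segment lies inside a $d$-cube refined uniformly to level $L - 1$, and we invoke the outer induction hypothesis on $L - 1$ with $d'$ reset to $d$.

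For the inductive step, take $L > 0$ and $d' \geq 1$, and split the range as $\Idk = \Idmk \cup \Idmkp$. If the segment is fully contained in one of the two halves, the inner induction hypothesis on $d' - 1$ gives the bound of two subvolumes directly. Otherwise the segment properly crosses the interface between $\Idmk$ and $\Idmkp$. Since it is contiguous, the part of the segment lying in $\Idmk$ must end precisely at the last level-$L$ subquadrant of $\Idmk$, and the part in $\Idmkp$ must begin precisely at the first level-$L$ subquadrant of $\Idmkp$. Proposition~\ref{arboneended} now applies separately to each of these two parts in its respective sub-range, producing exactly one face-connected subvolume per half. Summed, this gives at most two face-connected subvolumes in total.

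The principal obstacle here is purely organizational, namely confirming that the outer-on-$L$, inner-on-$d'$ induction is properly founded and that the "contains the first or last subquadrant" hypothesis of Proposition~\ref{arboneended} is genuinely satisfied for both halves in the crossing case. Observe that we do \emph{not} need to argue that the two half-subvolumes remain distinct: any face connections that happen to form across the $\Idmk / \Idmkp$ interface only decrease the count, and we merely require the upper bound. All genuinely geometric content has been absorbed into Proposition~\ref{arboneended}, so once that lemma is available, the present proposition reduces to this clean two-parameter recursion.
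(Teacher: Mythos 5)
Your proof is correct and follows essentially the same route as the paper: an outer induction on $L$ with an inner induction on $d'$, splitting $\Idk$ into $\Idmk$ and $\Idmkp$, and invoking Proposition~\ref{arboneended} on the two halves when the segment crosses the interface (since the lower part then contains the last subquadrant of $\Idmk$ and the upper part the first of $\Idmkp$). No substantive differences from the paper's argument.
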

\begin{proof}
  Again let us prove the statement first for $L = 1$.  If $d' = 0$ we have just
  one level 1 subquadrant that clearly satisfies our claim.  For positive
  $d'$ we distinguish the following cases.  If the segment is fully contained
  in either $\Idmk$ or $\Idmkp$, we apply the induction on $d'$.  Else we know
  that the last subquadrant of $\Idmk$ and the first of $\Idmkp$ are in the
  segment.  By Proposition~\ref{arboneended} we have at most two disconnected
  pieces and the statement holds.
  
  If $L > 1$ the case $d' = 0$ reduces to the same statement for $L - 1$ and we
  are done by applying the induction over $L$.  Else, the proof proceeds
  unchanged as above with the desired result.
\end{proof}
We have implicitly proved the main result for any uniform level $L$
subdivision, since a level 0 subtree trivially satisfies our claim, and
otherwise the root cube is the union of the level 1 subtrees $I^d_0$.

\subsection{A non-inductive proof for $d$-cubes}
\seclab{nonrecursiveproofs}


In this section we elaborate on the formalism of the Morton index
(see \secref{concepts-cubical}) to obtain the result without induction.
The tool we use is the map $\Omega$ from the index
$Q=(q^1\dots q^L)_2$ to a subset of $\mathbb{R}^d$ stated in \eqnref{rdset}.
For $1\leq r \leq d$, we define the coordinate along axis $r$,
\begin{equation}
  \eqnlab{Qaxisr}
  Q_r = (q^1_r\dots q^L_r)_2.
\end{equation}
The map $\Omega(Q)$ may be written as
\begin{equation}
  \eqnlab{Qmapsimpler}
  \Omega(Q) = [2^{-L}Q_1, 2^{-L}(Q_1 + 1)] \times \dots \times
  [2^{-L}Q_d, 2^{-L}(Q_d + 1)].
\end{equation}

\begin{lem}
 If $Q$ and $\tilde{Q}$ are such that $\tilde{Q}_k \leq Q_k$ for all $1\leq
 k\leq d$, then $\tilde{Q} \leq Q$.
\end{lem}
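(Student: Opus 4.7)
The plan is to argue the contrapositive, comparing the $dL$-bit binary expansions of $Q$ and $\tilde{Q}$ in the usual lexicographic fashion from the most significant bit down. The claim is trivial when $\tilde{Q} = Q$, so suppose instead that $\tilde{Q} > Q$; I would show that then $\tilde{Q}_r > Q_r$ for at least one axis $r$, contradicting the hypothesis.

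First, unfold the definitions: reading $Q = (q^1 \ldots q^L)_2$ from most to least significant yields the bit string $q^1_d, q^1_{d-1}, \ldots, q^1_1, q^2_d, \ldots, q^L_1$, and likewise for $\tilde{Q}$. From $\tilde{Q} > Q$ there is a unique highest-order position at which the two strings disagree; at that position $\tilde{Q}$ shows $1$ and $Q$ shows $0$. Decode this position into a pair $(\ell, r)$ with $\tilde{q}^\ell_r = 1$ and $q^\ell_r = 0$. By the choice of the position, $\tilde{q}^{\ell'}_{r'} = q^{\ell'}_{r'}$ for every pair $(\ell', r')$ that lies strictly above $(\ell, r)$ in the Morton order, which in particular covers every pair with $\ell' < \ell$, irrespective of $r'$.

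Now compare the axis-$r$ projections $Q_r = (q^1_r \ldots q^L_r)_2$ and $\tilde{Q}_r = (\tilde{q}^1_r \ldots \tilde{q}^L_r)_2$ bit by bit from the left. The first $\ell - 1$ bits coincide by the previous step, while at position $\ell$ we have $q^\ell_r = 0 < 1 = \tilde{q}^\ell_r$. Hence $\tilde{Q}_r > Q_r$, the desired contradiction.

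The only mildly delicate point is the bookkeeping that relates positions in the interleaved Morton bit string to pairs $(\ell, r)$ — specifically, checking that every earlier level $\ell' < \ell$ contributes bits of \emph{higher} Morton significance than any bit at level $\ell$, so that on axis $r$ all bits above position $\ell$ agree. No induction on $L$ or $d$ is required, and the argument is essentially the standard observation that lexicographic comparison of an interleaved string refines the componentwise comparison of its sub-streams.
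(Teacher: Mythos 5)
Your contrapositive argument is correct: the most significant differing bit of the interleaved strings lies on some axis $r$, and since interleaving preserves the relative significance of the bits belonging to each axis, that bit is also the most significant differing bit of $Q_r$ versus $\tilde{Q}_r$, forcing $\tilde{Q}_r > Q_r$. This is exactly the order-preservation fact the paper's one-line proof appeals to ("the order of the bits in $Q_k$ is the same as their order in $Q$"), so your proof is essentially the same argument, just spelled out in detail.
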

\begin{proof}
  The order of the bits in $Q_k$ ($\tilde{Q}_k$) is the same as their order in
  $Q$ ($\tilde{Q}$).
\end{proof}

\begin{thm}
  \thmlab{Y0}
  For any index $Q^\text{end}$, the interior of $Y_0=\cup_{Q=0}^{Q^{\text{end}}}
  \Omega(Q)$ is star-shaped (and thus $Y_0$ is face-connected and contractible).
\end{thm}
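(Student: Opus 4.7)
The plan is to prove star-shapedness of $Y_0$ with respect to the centroid $p^* := (2^{-L-1}, \ldots, 2^{-L-1})$ of the first cube $\Omega(0)$; face-connectedness and contractibility then follow immediately as standard consequences. Since $p^* \in \Omega(0) \subseteq Y_0$ for every $Q^{\text{end}} \geq 0$, this is always a valid candidate star center. The underlying geometric intuition is that, starting at a point near the origin and travelling on a straight line toward any $y \in Y_0$, every coordinate moves monotonically within the coordinate range already covered by cubes with indices $\leq Q^{\text{end}}$.

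For any $y \in Y_0$, pick $Q \leq Q^{\text{end}}$ with $y \in \Omega(Q)$, and consider a generic point $z = (1-t) p^* + t y$ for $t \in [0,1]$. The goal is to show $z \in Y_0$. I would argue componentwise: $z_k$ lies between $p^*_k = 2^{-L-1}$ and $y_k \in [Q_k 2^{-L}, (Q_k + 1) 2^{-L}]$. If $y_k \geq p^*_k$ then $z_k \leq y_k \leq (Q_k + 1) 2^{-L}$; if $y_k < p^*_k$ then $y_k < 2^{-L}$ forces $Q_k = 0$, and $z_k \leq p^*_k < 2^{-L} = (Q_k + 1) 2^{-L}$. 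Hence for every cube $\Omega(\tilde{Q})$ whose closure contains $z$, the integer $\tilde{Q}_k$ satisfies $\tilde{Q}_k \leq Q_k$. The preceding lemma then yields $\tilde{Q} \leq Q \leq Q^{\text{end}}$, so $\Omega(\tilde{Q}) \subseteq Y_0$, and in particular $z \in Y_0$. This establishes star-shapedness of $Y_0$ as a closed set.

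The same estimate gives the interior claim almost for free: at any $z$ on the open segment, every one of the up-to-$2^d$ cubes whose closure contains $z$ (possibly many, when $z$ sits on a common face, edge, or corner) lies in $Y_0$, so their union is an open neighborhood of $z$ inside $Y_0$. Since $p^*$ itself lies in the open interior of $\Omega(0) \subseteq Y_0$, the whole open segment from $p^*$ to any interior $y$ stays in the interior of $Y_0$. The main subtlety I anticipate is exactly this point: one must bound $\tilde{Q}_k \leq Q_k$ for \emph{every} cube containing $z$, not only for the cube singled out by rounding coordinates down. The robustness of the componentwise inequality under the ambiguity of assigning face, edge, and corner points to cubes is what makes the proof go through without a separate case analysis for boundary points.
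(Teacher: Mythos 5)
Your proposal is correct and takes essentially the same route as the paper: you use the same star center (the midpoint of $\Omega(0)$) and the same key lemma (coordinatewise domination $\tilde{Q}_k \le Q_k$ forces $\tilde{Q} \le Q \le Q^{\text{end}}$), and your componentwise segment estimate is just a pointwise rendering of the paper's observation that the anchored box $[0,2^{-L}(Q_1+1)]\times\dots\times[0,2^{-L}(Q_d+1)]$ lies entirely in $Y_0$, is convex, and contains that center in its interior. One small imprecision: your non-strict chain $z_k \le (Q_k+1)2^{-L}$ only yields $\tilde{Q}_k \le Q_k$ for $t<1$ (at $t=1$ a neighbor with $\tilde{Q}_k = Q_k+1$ may also contain $z$), but this endpoint is harmless since there $z = y \in \Omega(Q)$, and your interior argument correctly uses only the open segment.
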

\begin{proof}
  If $Q\in \{0,\dots,Q^{\text{end}}\}$, then so are all $\tilde{Q}$ such that
  $\tilde{Q}_k\leq Q_k$ for all $1\leq k \leq d$.  The domains of these
  quadrants define a box between the origin and the corner of $\Omega(Q)$
  farthest from the origin,
  \begin{equation}
    B(Q) := [0,2^{-L} (Q_1 + 1)]\times \dots \times [0,2^{-L} (Q_d + 1)].
  \end{equation}
  Indeed, $Y_0$ is the union of these boxes, $Y_0 = \cup_{Q=0}^{Q^{\text{end}}}
  B(Q)$, and the union of their interiors is the interior of $Y_0$.  Each of
  these boxes contains the midpoint of $\Omega(0)$ in its interior and is
  star-shaped with respect to it.  Therefore the interior of $Y_0$ is
  star-shaped with respect to that point as well.
\end{proof}

\begin{cor}
  \label{cor:start}
  If $Q \vert Q^{\text{start}} = Q$ ($\vert$ means bitwise-or) for all
  $Q\in\{Q^{\text{start}},\dots, Q^{\text{end}}\}$, then the interior of
  $Y=\cup_{Q=Q^{\text{start}}}^{Q^{\text{end}}} \Omega(Q)$ is star-shaped.
\end{cor}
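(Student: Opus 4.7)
The plan is to adapt the proof of \thmref{Y0} by taking the midpoint of $\Omega(Q^{\text{start}})$ as the star-point in place of the midpoint of $\Omega(0)$. The first step translates the bitwise-or hypothesis into a coordinate-wise inequality: by \eqnref{Qaxisr} the bits of the Morton index $Q$ partition disjointly into those of the coordinate indices $Q_1,\dots,Q_d$, so the identity $Q \vert Q^{\text{start}} = Q$ is equivalent to $(Q^{\text{start}})_r \vert Q_r = Q_r$ for every $r$; this expresses that the $1$-bits of $(Q^{\text{start}})_r$ are contained among those of $Q_r$, and hence $(Q^{\text{start}})_r \leq Q_r$.

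Next I would observe that for any $Q$ in the summation range and any $\tilde{Q}$ with $(Q^{\text{start}})_r \leq \tilde{Q}_r \leq Q_r$ for all $r$, the preceding lemma yields $Q^{\text{start}} \leq \tilde{Q} \leq Q \leq Q^{\text{end}}$, so $\tilde{Q}$ is again in the range. Because each coordinate index can be chosen independently (the bit interleaving is a bijection), the axis-aligned box
\[
  B(Q) := \prod_{r=1}^{d} \bigl[\,2^{-L}(Q^{\text{start}})_r,\; 2^{-L}(Q_r+1)\,\bigr]
\]
decomposes as $\bigcup_{\tilde{Q}} \Omega(\tilde{Q})$ over all such $\tilde{Q}$, and is therefore contained in $Y$. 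Conversely $\Omega(Q) \subseteq B(Q)$, so
\[
  Y = \bigcup_{Q=Q^{\text{start}}}^{Q^{\text{end}}} B(Q).
\]

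Finally, the point $p$ with coordinates $2^{-L}\bigl((Q^{\text{start}})_r + \tfrac{1}{2}\bigr)$ lies in the interior of every $B(Q)$ (because $(Q^{\text{start}})_r \leq Q_r$), and any axis-aligned box is star-shaped with respect to each of its interior points. Since the interior of $Y$ is the union of the interiors of the $B(Q)$, each star-shaped with respect to the common point $p$, the interior of $Y$ is itself star-shaped with respect to $p$.

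The main obstacle is bookkeeping: one must verify carefully that the bitwise-or identity passes to each coordinate of the Morton index, and that any coordinate-wise ``interpolation'' $\tilde{Q}$ reassembles into a valid Morton index that still lies in $\{Q^{\text{start}},\dots,Q^{\text{end}}\}$. Both facts rest on the interleaving \eqnref{Qaxisr} together with the preceding lemma, after which the geometric star-shaped argument mirrors that of \thmref{Y0}.
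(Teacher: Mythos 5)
Your proof is correct, but it takes a different route from the paper. The paper reduces the corollary to Theorem~\ref{thm:Y0} by a translation argument: since $Q\vert Q^{\text{start}}=Q$ means the 1-bits of $Q^{\text{start}}$ are a subset of those of $Q$, the subtraction $Q-Q^{\text{start}}$ incurs no borrows, so $(Q-Q^{\text{start}})_k=Q_k-Q^{\text{start}}_k$ coordinate-wise, and hence the segment's domain is just a translate of $\cup_{Q=0}^{Q^{\text{end}}-Q^{\text{start}}}\Omega(Q)$, to which Theorem~\ref{thm:Y0} applies as a black box. You instead rerun the Theorem~\ref{thm:Y0} argument in place: you reduce the bitwise-or hypothesis to the coordinate-wise inequality $(Q^{\text{start}})_r\le Q_r$, cover $Y$ by boxes anchored at the lower corner of $\Omega(Q^{\text{start}})$ (using the monotonicity lemma in both directions to keep the interpolated indices $\tilde Q$ inside the segment), and take the midpoint of $\Omega(Q^{\text{start}})$ as an explicit star center. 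Both are sound; your claim that the interior of $Y$ is the union of the box interiors is at the same level of rigor as the corresponding step in the paper's proof of Theorem~\ref{thm:Y0}, and it is in fact valid here because all your boxes share a common minimum corner. What the paper's route buys is brevity and, more importantly, the explicit translation identity, which is reused almost verbatim later in the proof of Theorem~\ref{thm:lowerbound}; what your route buys is a self-contained argument that avoids the no-borrow subtraction identity and exhibits the star center concretely.
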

\begin{proof}
  If $Q\vert Q^{\text{start}} = Q$, then the 1-bits of $Q-Q^{\text{start}}$
  are a subset of the 1-bits of $Q$.  One can then verify that
  $(Q-Q^{\text{start}})_k = Q_k - Q_k^{\text{start}}$ for all $1\leq k\leq d$.
  Therefore $\cup_{Q=Q^{\text{start}}}^{Q^{\text{end}}}
  \Omega(Q-Q^{\text{start}})$ is $Y$ translated by the vector
  $(-2^{-L}Q_1^{\text{start}},\dots,-2^{-L}Q_d^{\text{start}})$.  This is the
  same as $\cup_{Q=0}^{Q^{\text{end}} - Q^{\text{start}}} \Omega(Q)$,
  which is star-shaped by \thmref{Y0}.
\end{proof}

\begin{cor}
  \label{cor:end}
  If $Q \& Q^{\text{end}} = Q$ ($\&$ denotes the bitwise and operator) for all
  $Q\in\{Q^{\text{start}},\dots, Q^{\text{end}}\}$, then the interior of 
  $Y$ is star-shaped.
\end{cor}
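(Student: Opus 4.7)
The plan is to reduce Corollary \ref{cor:end} to the already-proved Corollary \ref{cor:start} by invoking the reversal symmetry of the Morton curve encoded in \eqnref{Rdef}. Concretely, I would apply the one-complement operator $R(Q) = 2^{dL} - 1 - Q$ to every index in the range $\{Q^{\text{start}},\dots,Q^{\text{end}}\}$ and verify that the hypotheses of Corollary \ref{cor:start} are satisfied by the reversed range.

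First, I would set $\tilde{Q} := R(Q)$ for each $Q$ in the range, and define $\tilde{Q}^{\text{start}} := R(Q^{\text{end}})$ and $\tilde{Q}^{\text{end}} := R(Q^{\text{start}})$. Since $R$ is an order-reversing involution on $[0,2^{dL})$, the bijection $Q\mapsto\tilde{Q}$ carries the integer interval $\{Q^{\text{start}},\dots,Q^{\text{end}}\}$ onto $\{\tilde{Q}^{\text{start}},\dots,\tilde{Q}^{\text{end}}\}$. Next, I would translate the bitwise-and hypothesis into a bitwise-or hypothesis: the assumption $Q\,\&\,Q^{\text{end}} = Q$ says precisely that the set of 1-bits of $Q$ is contained in the set of 1-bits of $Q^{\text{end}}$. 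Taking complements swaps 1-bits and 0-bits simultaneously, so this is equivalent to saying that the set of 1-bits of $\tilde{Q}^{\text{start}}$ is contained in the set of 1-bits of $\tilde{Q}$, i.e.\ $\tilde{Q}\vert\tilde{Q}^{\text{start}} = \tilde{Q}$. This is exactly the hypothesis of Corollary \ref{cor:start} applied to the reversed segment, so the interior of $\tilde{Y} := \bigcup_{\tilde{Q}=\tilde{Q}^{\text{start}}}^{\tilde{Q}^{\text{end}}}\Omega(\tilde{Q})$ is star-shaped with respect to some point $\tilde{p}$.

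It then remains to transport this conclusion back to $Y$. Here I would use the geometric interpretation of $R$ recorded immediately after \eqnref{Rdef}: $R$ acts on quadrants by point reflection about the center $c$ of the root cube. Consequently $Y$ is the image of $\tilde{Y}$ under the affine isometry $x\mapsto 2c-x$, and since star-shapedness is preserved by isometries, the interior of $Y$ is star-shaped with respect to $2c - \tilde{p}$.

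The main obstacle I anticipate is simply the bit-level bookkeeping in the second step, namely verifying cleanly that complementing each index converts the $\&$-condition on the right endpoint into the $\vert$-condition on the left endpoint while correctly exchanging the roles of $Q^{\text{start}}$ and $Q^{\text{end}}$. Once that equivalence is laid out explicitly, the rest of the argument is immediate from Corollary \ref{cor:start} together with the geometric meaning of $R$, and no further induction or case analysis is required.
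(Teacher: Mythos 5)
Your proposal is correct and takes essentially the same route as the paper's own proof: both reduce Corollary~\ref{cor:start} via the bitwise negation $R$ of \eqnref{Rdef}, using the De Morgan identity $R(Q)\,\vert\,R(Q^{\text{end}}) = R(Q\,\&\,Q^{\text{end}}) = R(Q)$ together with the fact that $R$ acts geometrically as the point reflection about the cube's midpoint, which preserves star-shapedness. Your version simply spells out the order-reversal and bit-level bookkeeping that the paper leaves implicit.
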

\begin{proof}
  Mirroring every quadrant about the midpoint of the unit cube does not change
  the shape of $Y$.  The mirror of $\Omega(Q)$ is $\Omega(R(Q))$, where $R(Q)$
  denotes the bitwise negation \eqnref{Rdef}.
  Therefore $R(Q)\vert R(Q^{\text{end}}) =
  R(Q \& Q^{\text{end}}) = R(Q)$ for all $R(Q) \in \{R(Q^{\text{end}}),\dots,
  R(Q^{\text{start}})\}$.
\end{proof}

\begin{thm}
  \label{thm:uniontwo}
  The interior of
  $Y$ is
  star-shaped, or $Y$ is the union of two sets whose interiors are
  star-shaped.
\end{thm}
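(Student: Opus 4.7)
The plan is to pick a single split index $Q^{\text{mid}}$ so that Corollary~\ref{cor:end} applies to the lower half $\{Q^{\text{start}},\dots,Q^{\text{mid}}\}$ of the range and Corollary~\ref{cor:start} applies to the upper half $\{Q^{\text{mid}}+1,\dots,Q^{\text{end}}\}$; this immediately exhibits $Y$ as a union of two sets with star-shaped interiors. (If $Q^{\text{start}}=Q^{\text{end}}$, the set $Y$ is a single quadrant and Corollary~\ref{cor:start} applies trivially, giving the first alternative in the theorem.)

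To locate $Q^{\text{mid}}$, I would let $p$ be the most significant bit at which $Q^{\text{start}}$ and $Q^{\text{end}}$ differ; since $Q^{\text{start}}<Q^{\text{end}}$, bit $p$ is $0$ in $Q^{\text{start}}$ and $1$ in $Q^{\text{end}}$, and all bits above $p$ form a common prefix shared by every $Q$ in the range. Define $Q^{\text{mid}}$ to carry this common prefix, a $0$ at position $p$, and $1$s at all positions below $p$; then $Q^{\text{mid}}+1$ carries the same prefix, a $1$ at position $p$, and $0$s at all positions below. In particular $Q^{\text{start}}\leq Q^{\text{mid}}<Q^{\text{mid}}+1\leq Q^{\text{end}}$, so both subranges are non-empty.

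The verification of the two corollary hypotheses is then bitwise bookkeeping. For any $Q\in\{Q^{\text{start}},\dots,Q^{\text{mid}}\}$, $Q$ carries the common prefix, has $0$ at position $p$, and arbitrary bits below $p$; hence the $1$-bits of $Q$ form a subset of the $1$-bits of $Q^{\text{mid}}$, so $Q\,\&\,Q^{\text{mid}}=Q$ and Corollary~\ref{cor:end} gives a star-shaped interior. For any $Q\in\{Q^{\text{mid}}+1,\dots,Q^{\text{end}}\}$, $Q$ carries the common prefix and has $1$ at position $p$; hence the $1$-bits of $Q^{\text{mid}}+1$ (the prefix together with bit $p$) are a subset of the $1$-bits of $Q$, so $Q\,\vert\,(Q^{\text{mid}}+1)=Q$ and Corollary~\ref{cor:start} gives a star-shaped interior. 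I do not foresee a real obstacle: the genuine work is in selecting the split, and once $Q^{\text{mid}}$ is identified the two hypotheses fall out from direct inspection of bit patterns. The only minor points to take care of are the degenerate case $Q^{\text{start}}=Q^{\text{end}}$ and the non-emptiness of the two subranges, both of which are handled by the construction above.
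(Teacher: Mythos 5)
Your proof is correct and is essentially the paper's own argument: you split the range immediately after the common high-bit prefix (your $Q^{\text{mid}}=(\tilde q\,0\,1\dots1)_2$ and $Q^{\text{mid}}+1=(\tilde q\,1\,0\dots0)_2$ are exactly the paper's split points) and apply Corollary~\ref{cor:end} to the lower piece and Corollary~\ref{cor:start} to the upper piece. The only difference is that you spell out the bitwise verification of the two corollary hypotheses and the degenerate single-quadrant case, which the paper leaves implicit.
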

\begin{proof}
  Let $\tilde{q}$ be the most significant bits common to all of
  $\{Q^{\text{start}},\dots,Q^{\text{end}}\}$.  We can split the segment into
  $\{Q^{\text{start}},\dots,(\tilde{q}011\dots1)_2\}$ and
  $\{(\tilde{q}100\dots0)_2,\dots,Q^{\text{end}}\}$.  The interior of the
  domain of the first segment is star-shaped by Corollary~\ref{cor:end}; the
  interior of the domain of the second segment is star-shaped by
  Corollary~\ref{cor:start}.
\end{proof}

\subsection{From uniform to adaptive meshes}
\seclab{main}

We have completed the necessary proofs for a uniform space division, in the
case of cubical refinement for any space dimension $d$, and previously for
triangular and tetrahedral refinement (see \secref{illustrated-simplicial}).
As we state in this section, an adaptive space division does not require any
more effort (see also \cite[page 176]{Bader12}).
\begin{proof}[Proof of Theorems~\ref{illthmallcube} and \ref{illthmalltets}]%
  Any adaptive tree of quadrants with level $\le L$ can be refined into level
  $L$ quadrants exclusively.  This operation does not change the connectivity
  between boundaries of the designated subdomain.  In particular, the number of
  face-connected subdomains remains unchanged and the proof reduces to applying
  Propositions~\ref{illpropboth}, \ref{prop:faceconncomp} (only $d \le 3$)
  or~\ref{arbtwoended} (any $d$) above.
\end{proof}

\subsection{From one tree to a forest}
\seclab{forest}


If we consider a forest of octrees as in \cite{StewartEdwards04,
BangerthHartmannKanschat07, BursteddeWilcoxGhattas11}, a contiguous segment of
the Morton curve may traverse more than one tree.
In this case, the segment necessarily contains the last subquadrant of any
predecessor tree, as well as the first subquadrant of any successor tree in the
segment.
In the cubical case, we know by Proposition~\ref{arboneended} that no jumps can
occur at all (when not counting the transition between two successive trees as
a jump).
For the simplicial case, we may use Lemmas~\ref{lem:faceconn2d} and
\ref{lem:faceconnlem1} to obtain
the bounds $L+1$ (2D) and $2L+1$ (3D).

\section{Enumeration of face-connected segments}
\seclab{enumeration}

We would like to examine not only how many pieces an SFC segment can have, but
also how frequently segments of different numbers of pieces occur.
To this end, we propose a theoretical lower bound for the cubical case
and supply numerical studies for both cubical and tetrahedral SFCs.

\subsection{Lower bound on fraction of continuous segments}
\seclab{prooflowerbound}


\begin{thm}
  \label{thm:lowerbound}
  The fraction of continuous segments of length $l$ of the level-$L$
  $d$-dimensional, cubical Morton curve is
  \begin{equation}
    \phi_{d,L,l} \geq \frac{1}{2^d  -1}.
  \end{equation}
\end{thm}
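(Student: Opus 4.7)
My plan is to identify, for each length $l$, a family of ``good'' starting positions whose corresponding segment is guaranteed to be face-connected, and to show that this family has density at least $1/(2^d-1)$. The natural candidates come from Corollaries~\ref{cor:start} and~\ref{cor:end}, which say that segments with sufficiently aligned starting or ending index are star-shaped, hence face-connected.

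Fix $l$ and set $k=\lceil\log_2 l\rceil$ so that $2^{k-1}<l\le 2^k$. If $Q^{\text{start}}$ is a multiple of $2^k$, then its lowest $k$ bits vanish and adding any $j\in[0,l-1]\subset[0,2^k-1]$ cannot propagate a carry past position~$k$; consequently every $Q\in[Q^{\text{start}},Q^{\text{end}}]$ preserves the set bits of $Q^{\text{start}}$ and Corollary~\ref{cor:start} applies. Symmetrically, Corollary~\ref{cor:end} covers every $Q^{\text{start}}$ with $Q^{\text{start}}+l$ a multiple of $2^k$. The two residue classes modulo $2^k$ coincide when $l=2^k$ and are disjoint otherwise, together contributing roughly $2\cdot 2^{dL-k}$ guaranteed-continuous segments (halved when they coincide).

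For $l\le 2^{d-1}$ a single residue class already gives a fraction at least $2^{-(d-1)}\ge 1/(2^d-1)$, so the bound follows immediately; for $l$ slightly larger (up to roughly $2^d$) the two residue classes together suffice by the same direct estimate. The delicate regime is $l\ge 2^d$, and especially when $l$ is itself close to a power of two, where the alignment count drops to about $2^{-d}<1/(2^d-1)$. Here I would refine Theorem~\ref{thm:uniontwo}: any segment splits into two star-shaped pieces $Y_1,Y_2$ at the highest differing bit $p$, and the two pieces occupy opposite halves of a common ancestor cube sharing the face perpendicular to the axis indexed by bit~$p$. The segment is continuous whenever the transverse ``shadows'' of $Y_1$ and $Y_2$ on this shared face intersect; this geometric condition translates into an arithmetic condition on the bits of $Q^{\text{start}}$ below position~$p$. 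Summing the count of such $Q^{\text{start}}$ over all possible split positions~$p$, together with the previous alignment-based count, should give a combined density of at least $1/(2^d-1)$ uniformly in $l$ and $L$.

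The main obstacle is exactly this last step: rigorously quantifying the geometric-overlap contribution and verifying the target density uniformly in the two parameters. I expect the constant $2^d-1$ to arise from the fact that within each level-1 parent cell there are $2^d-1$ sibling transitions in the Morton order --- of which $2^{d-1}$ are face-moves and $2^{d-1}-1$ are diagonal jumps --- so that in the tightest configurations essentially one out of every $2^d-1$ ``window offsets'' relative to a parent cell yields a continuous segment, and a clean counting across parent cells at all levels recovers the stated lower bound.
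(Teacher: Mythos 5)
There is a genuine gap. Your alignment-based count only certifies segments whose start (or end) is aligned to a power of two $2^k$ with $2^k \geq l$, and such starts have density $2^{-k}\approx 1/l$; doubling via Corollary~\ref{cor:end} still gives only about $2/l$, which for $l \gg 2^d$ is far below the target $1/(2^d-1)$. You correctly identify this as the delicate regime, but the step you propose to close it --- quantifying when the two star-shaped halves from Theorem~\ref{thm:uniontwo} have overlapping ``shadows'' and summing over split positions --- is exactly the missing content of the theorem, and you do not carry it out. As written, the argument proves the bound only for $l$ up to roughly $2^d$, not uniformly in $l$ and $L$.

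The paper closes this gap with a different and sharper idea: instead of counting good offsets directly, it looks at the \emph{first} discontinuous segment $s=\{Q^{\text{start}},\dots,Q^{\text{end}}\}$ of length $l$. Being first forces its common leading bits to vanish, so $s$ straddles a single boundary $2^k$, with pieces $s^-=\{Q^{\text{start}},\dots,2^k-1\}$ and $s^+=\{2^k,\dots,Q^{\text{end}}\}$ and $Q^{\text{end}}\leq 2^{k+1}-1$. For $s$ to be disconnected, $s^-$ must exclude the face-neighbor $Q^*$ of $\Omega(2^k)$ in direction $j+1$, $j=k \bmod d$, whose index is computed explicitly as $Q^*=(2^k-2^j)/(2^d-1)$; hence $Q^{\text{start}}>Q^*$, so \emph{every} start in $\{0,\dots,Q^*\}$ yields a continuous segment, and by the reflection $\tilde R(Q)=2^{k+1}-1-Q$ (and the bound $l\leq 2^{k+1}-2Q^*-2$, which keeps the two families disjoint) one gets at least $2(Q^*+1)$ continuous segments per block of $2^{k+1}$ offsets; translation by multiples of $2^{k+1}$ propagates this density, giving $(Q^*+1)/2^k\geq 1/(2^d-1)$. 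Note also that the constant $2^d-1$ does not come from counting sibling transitions within a parent cell, as you conjecture, but from the geometric series $\sum_{i\geq 1}2^{-id}=1/(2^d-1)$ that appears in the binary expansion of the face-neighbor index $Q^*$. If you want to salvage your route, the lemma you would need is precisely a lower bound of the form $Q^{\text{start}}>c\,2^k$ for any discontinuous straddling segment, which is what the face-neighbor argument supplies.
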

\begin{proof}
  Let $s=\{Q^{\text{start}},\dots,Q^{\text{end}} = Q^{\text{start}}+l-1\}$ be
  the first discontinuous segment of length $l$.  As in the proof of
  Theorem~\ref{thm:uniontwo}, $s$ divides into two continuous segments,
  $\{Q^{\text{start}},\dots,(\tilde{q}011\dots1)_2\}$ and
  $\{(\tilde{q}100\dots0)_2,\dots,Q^{\text{end}}\}$, where $\tilde{q}$ are
  significant bits that are common to all numbers in the segment.  By the same
  reasoning as used in Corollary~\ref{cor:start}, the shapes of the two segments
  are not affected by $\tilde{q}$: changing $\tilde{q}$ translates the whole
  domain.  Therefore, as this is the first discontinuous segment, $\tilde{q}$
  must be $(0\dots 0)$, i.e., the two continuous pieces of the segment are
  $s^-=\{Q^{\text{start}},\dots,2^k-1\}$ and $s^+=\{2^k,\dots,Q^{\text{end}}\}$
  for some $k$.  Since $Q^{\text{end}}$ cannot have a more significant 1-bit
  than $2^k$, $Q^{\text{end}} \leq 2^{k+1} - 1$.  We note that bitwise negation
  of the first $k+1$ bits, $\tilde{R}(Q):=2^{k+1}-1-Q$, induces a map
  $\Omega(Q)\mapsto \Omega(\tilde{R}(Q))$ that is the reflection about the
  midpoint of the box formed by $\cup_{Q=0}^{2^{k+1}-1} \Omega(Q)$.

  We first want to find a lower bound for $Q^{\text{start}}$.  We note that
  $2^k\in s^+$, so if $\Omega(Q^*)$ is a face neighbor of $\Omega(2^k)$, then
  $Q^*\not\in s^-$, and if $Q^* < 2^k$, then because of the definition of $s^-$,
  $Q^*$ is less than every $Q\in s^-$, including $Q^{\text{start}}$. 
  Let $j=k \spmod d$ and let $j'=d - j - 1$.
  The $Q^*$ that plays the crucial role is the neighboring quadrant that is
  closer to the origin in
  the $(j+1)$th direction.  We have to subtract one from the $(j+1)$th
  coordinate of $2^k$: i.e., $(2^k)_{j+1}=2^{\lfloor k / d \rfloor}$ in the
  notation of \eqnref{Qaxisr}, so
  $Q^*_{j+1} =2^{\lfloor k / d \rfloor} - 1$, while the other coordinates are
  the same, viz.\ zero.  Therefore
  \begin{equation}
    \begin{aligned}
      Q^*
      &= (0\dots0\overbrace{%
        \underbrace{0\dots0}_{\text{$j'$-times}}1%
        \underbrace{0\dots0}_{\text{$j$-times}}%
      }^{\text{$\lfloor k / d\rfloor$-times}})_2
      & &= \sum_{i=1}^{\lfloor k / d \rfloor} 2^{(i-1)d + j} \\
      &= \sum_{i=1}^{\lfloor k / d \rfloor} 2^{(\lfloor  k / d \rfloor - i)d +
      j}
      & &= 2^k\sum_{i=1}^{\lfloor k / d \rfloor} 2^{-i d} \\
      &= 2^k\left(\sum_{i=1}^{\infty} 2^{-i d} - \sum_{i=\lfloor k /d \rfloor
        + 1}^{\infty} 2^{-id}\right)
      & &= 2^k\frac{1}{2^{d}-1} - 2^k
      \sum_{i=\lfloor k /d \rfloor + 1}^{\infty} 2^{-id} \\
      &= 2^k\frac{1}{2^{d}-1} - 2^j\sum_{i=1}^{\infty}
      2^{-id}
      & &= \frac{2^k - 2^j}{2^d-1}.
    \end{aligned}
  \end{equation}

  By the definition of $Q^\text{start}$, there is a continuous segment of
  length $l$ that starts at each $Q\in \{0,\dots, Q^*\}$.  For each of these
  continuous segments, there is another continuous segment, obtained by the
  reflection map $\tilde{R}$, than ends with
  $Q\in\{\tilde{R}(Q^*),\dots,2^{k+1}-1\}$.  We want to show that these two
  sets of continuous segments are distinct, i.e., that there is no segment of
  length $l$ that starts with $Q\leq Q^*$ and ends with $Q+l-1\geq
  \tilde{R}(Q^*)$.  We thus have to show that the shortest segment with
  endpoints in each set, $\{Q^*,\dots,\tilde{R}(Q^*)\}$ is longer than $l$,
  i.e., $l< \tilde{R}(Q^*) - Q^* + 1 = 2^{k+1} - 2Q^*$.

  We will prove this bound by finding an upper bound for $Q^{\text{end}}$.
  We note that $\tilde{R}(2^k)=2^k-1\in s^-$, so $\Omega(\tilde{R}(Q^*))$ is a face
  neighbor of $\Omega(2^k-1)$, so by the same reasoning as above,
  $\tilde{R}(Q^*)$ must be greater than $Q^{\text{end}}$, and thus
  \begin{equation}\eqnlab{lbound}
    l = Q^{\text{end}} - Q^{\text{start}} + 1
    \leq (\tilde{R}(Q^*) - 1) - (Q^* + 1) + 1
    =
    2^{k+1} - 2Q^* - 2.
  \end{equation}

  We have shown that there are at least $2(Q^*+1)$ continuous segments in the
  first $2^{k+1}$ segments, each of which begins and ends in the range
  $\{0,\dots,2^{k+1}-1\}$.  None of the numbers in these segments has more than
  $k+1$ significant bits, so by the same reasoning as in
  Corollary~\ref{cor:start}, adding a multiple of $2^{k+1}$ to each number in
  one of these segments is a translation of its domain, and is thus continuous.
  Therefore there are at least $2(Q^*+1)$ continuous segments for \emph{every}
  $2^{k+1}$ segments, and thus
  \begin{equation}
    \begin{aligned}
      \phi_{d,L,l} &\geq \frac{2(Q^*+1)}{2^{k+1}} = \frac{Q^*+1}{2^k}
      \\
      &=
      \frac{2^k - 2^j}{2^k(2^d-1)} + \frac{1}{2^k}
      \\
      &=
      \frac{2^k - 2^j + 2^d - 1}{2^k(2^d-1)}
      \\
      &\geq
      \frac{2^k}{2^k(2^d-1)}
      & \text{($j < d$ by definition)}
      \\
      &=
      \frac{1}{2^d - 1}.
    \end{aligned}
  \end{equation}%
\end{proof}

\subsection{Computational studies---cubical Morton curve}
\seclab{numerical}

Having shown that a segment of a Morton curve is composed of one or two
face-connected subdomains, a natural question to ask is how many of each type
there are.  More formally, we ask: for a given dimension $d$, recursive level
$L$, and segment length $l$, what fraction $\phi_{d,L,l}$ of the
$2^{dL}-(l-1)$ possible segments are in one face-connected piece? 

This question can be answered recursively.  Each segment of length $l>1$ on
level $L$ can refine to $2^{2d}$ segments on level $(L+1)$ with lengths between
$2^d (l-2) + 2$ and $2^dl$, as illustrated in \figref{segmentrecurse}.  We
divide connected segments into two categories: weakly connected, when the
first and last quadrants in the segment are (face-)adjacent, and strongly connected,
when they are not.  Disconnected segments only refine to disconnected
segments.  Strongly connected segments only refine to strongly connected
segments.  Weakly connected segments refine to all three types: how many of
each depends on the direction in which the first and last quadrants are
adjacent.

\begin{figure}
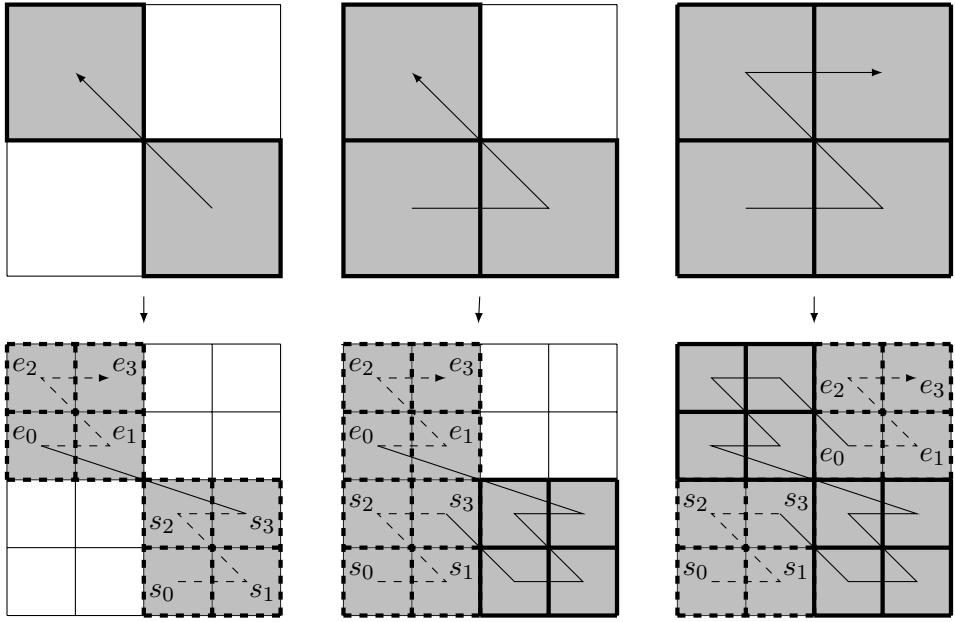
\centering
  \begin{minipage}{0.32\textwidth}\centering
    \inputtikz{segmentrecurse1}
  \end{minipage}
  \begin{minipage}{0.32\textwidth}\centering
    \inputtikz{segmentrecurse2}
  \end{minipage}
  \begin{minipage}{0.32\textwidth}\centering
    \inputtikz{segmentrecurse3}
  \end{minipage}
  \caption{%
    The refinement of disconnected (left), weakly connected (middle), and
    strongly connected segments (right).  Each coarse segment (top) refines to
    one of $2^{2d}=16$ possible refined segments (bottom), each starting with
    $s\in \{s_0,\dots,s_3\}$ and ending with $e\in \{e_0,\dots,e_3\}$.
    Disconnected segments refine to disconnected segments; strongly connected
    segments refine to strongly connected segments; weakly connected segments
    refine to disconnected (e.g., $\{s_3,\dots,e_0\}$), weakly connected
    (e.g., $\{s_2,\dots,e_0\}$), and strongly connected segments (e.g.,
    $\{s_0,\dots,e_3\}$).%
  }%
  \figlab{segmentrecurse}
\end{figure}%
We give pseudocode for this recursive calculation in the function Enumerate
(\algref{enumerate}).  This algorithm is implemented in the Python script
\texttt{morton.py}.%
\footnote{\url{https://github.com/cburstedde/p4est/tree/develop/doc/morton/morton.py}}

\begin{algorithm}
  \caption{%
    Enumerate ($d$, $L$, $l$)
  }%
  \alglab{enumerate}
  \DontPrintSemicolon

  \KwData{dimension $d\geq 1$, level $L\geq 0$, segment length $1\leq l\leq
  2^{dL}$.}

  \KwResult{$(n_d,n_s,n_{w,1},\dots,n_{w,d})$, the number of segments of
  length $l$ that are disconnected, strongly connected, and weakly connected
  in each direction.}

  \If%
  (\hfill[define segments of length 1 to be strongly connected])%
  {$l=1$ {\rm{\bf or}} $L=0$}%
  {%
    \KwRet{$(0,2^{dL},0,\dots,0)$}
    \Comment*{one segment for each quadrant}
  }%

  $(n_d,n_s,n_{w,1},\dots,n_{w,d}) \leftarrow (0,\dots,0)$

  $c\leftarrow \lceil l/2^d \rceil$
  \Comment*{compute the shortest length that can refine to length $l$}

  \lIf%
  {$l \spmod 2^d < 2$}%
  {%
    $C \leftarrow 1 + \lfloor l/2^d \rfloor$
  }%
  \lElse{%
    $C \leftarrow 2 + \lfloor l/2^d \rfloor$
  }%
  \Comment*{\ldots longest \ldots}

  \For{$k\in\{c,\dots,C\}$}{%

    \eIf{$k=1$}{%
      $N \leftarrow 2^{d(L-1)}$ \Comment*{\# of coarse quadrants}
      $(n_d,n_s,n_{w,1},\dots,n_{w,d}) \leftarrow
      (n_d,n_s,n_{w,1},\dots,n_{w,d}) +N*$ RefineOne ($d$,$l$)\;
    }%
    {%
      $r\leftarrow l - (k-2)2^d$ \Comment*{\# of children in two coarse
      end quadrants}
      $m\leftarrow \min\{(r-1),2^{d+1}-(r-1)\}$\Comment*{\# of ways to split
      between two families}

      $(N_d,N_s,N_{w,1},\dots,N_{w,d}) \leftarrow$ Enumerate ($d$, $L-1$, $k$)

      $n_d \leftarrow n_d + N_d * m$ \Comment*{disconnected $\rightarrow$
      disconnected}
      $n_s \leftarrow n_s + N_s * m$ \Comment*{strongly connected $\rightarrow$
      strongly connected}

      \For{$1\leq j\leq d$}{%
        $(n_d,n_s,n_{w,j}) \leftarrow (n_d,n_s,n_{w,j}) +N_{w,j}*$ RefineWeak
        ($d$, $j$, $r$)\;
      }%
    }%

  }%

  \KwRet{$(n_d,n_s,n_{w,1},\dots,n_{w,d})$}

\end{algorithm}

Enumerate calls on some lookup tables: RefineOne($d$,$l$) (\figref{refineone})
counts how many disconnected, strongly and weakly connected segments of length
$l$ are refined from one $d$-dimensional quadrant (the weakly connected
segments are broken down by the direction in which the first and last quadrant
are adjacent); RefineWeak($d$,$j$,$r$) (\figref{refineweak}) counts how many
disconnected, strongly and weakly connected segments with $r$ quadrants in the
end-families are refined from one weakly connected segment in direction $j$ (a
weakly connected segment only produces weakly connected segments in the same
direction). 

\begin{figure}\centering
  \inputtikz{refineone}

  \null

  \renewcommand{\arraystretch}{1.1}
  \begin{tabular}{|c|l|}\hline
    RefineOne(2,1) & $n_s = 4$ ($\{s_0\}$, $\{s_1\}$, $\{s_2\}$,
    $\{s_3\}$)
    \\ \hline
    RefineOne(2,2) & $n_d = 1$ ($\{s_1,s_2\}$), $n_{w,1} = 2$ ($\{s_0,s_1\}$,
    $\{s_2,s_3\}$)
    \\ \hline
    RefineOne(2,3) & $n_{w,2} = 2$ ($\{s_0,s_1,s_2\}$, $\{s_1,s_2,s_3\}$)
    \\ \hline
    RefineOne(2,4) & $n_s = 1$ ($\{s_0,s_1,s_2,s_3\}$)
    \\ \hline
  \end{tabular}
  \caption{%
    We list the RefineOne($d$,$l$) tables used in Enumerate
    (\algref{enumerate}) for $d=2$ as an example.  Unlisted values are zero.%
  }%
  \figlab{refineone}
\end{figure}

\begin{figure}\centering
  \renewcommand{\arraystretch}{1.1}
  \begin{tabular}{|c|l|}\hline
    RefineWeak(2,2,2) & $n_d = 1$ ($\{s_3,\dots,e_0\}$)
    \\ \hline
    RefineWeak(2,2,3) & $n_{w,2} = 2$ ($\{s_2,\dots,e_0\}$,
    $\{s_3,\dots,e_1\}$)
    \\ \hline
    RefineWeak(2,2,4) & $n_s = 3$ ($\{s_1,\dots,e_0\}$,
    $\{s_2,\dots,e_1\}$, $\{s_3,\dots,e_2\}$)
    \\ \hline
    RefineWeak(2,2,$5\leq r\leq 8$) &
    $n_s = 9-r$ ($\{s_0,\dots,e_{r-5}\}$,\dots,$\{s_{8-r},\dots,e_{3}\}$)
    \\ \hline
  \end{tabular}
  \caption{%
    We list the RefineWeak($d$,$j$,$r$) tables used in Enumerate
    (\algref{enumerate}) for $d=2$ and $j=2$ as an example.  The start points
    and end points refer to \figref{segmentrecurse} (middle).  Unlisted values
    are zero.%
  }%
  \figlab{refineweak}
\end{figure}

In \figref{enumerate}, we use Enumerate to calculate the fraction of connected
segments $\phi_{d,L,l}$ for $d=2$ and $d=3$ for large values of $L$.  We
observe that $\phi_{d,L,l}$ tends to vary between $1/2$ and
$1/(2^d-1)$.
(We had proved in \secref{prooflowerbound} that the latter is indeed a lower
bound.)

\begin{figure}
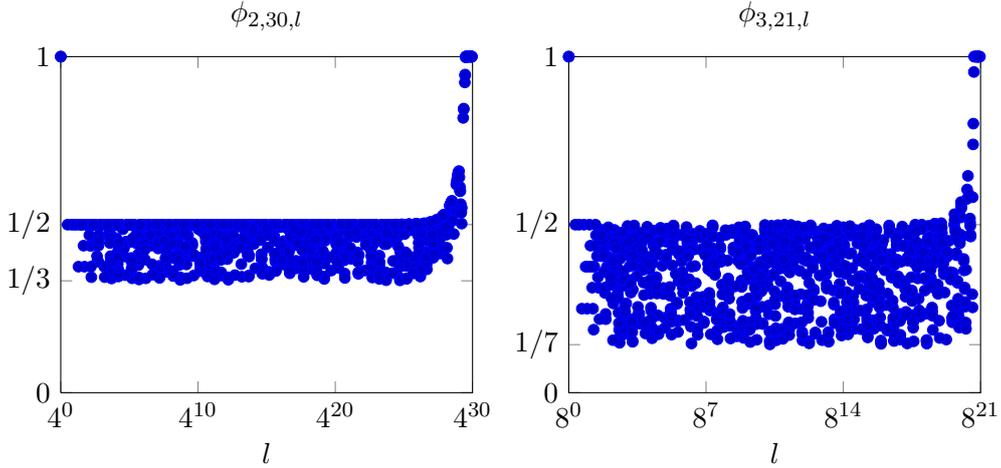
\centering
  \begin{minipage}{0.49\textwidth}\centering
    \inputtikz{enumerate2d}
  \end{minipage}
  \begin{minipage}{0.49\textwidth}\centering
    \inputtikz{enumerate3d}
  \end{minipage}
  \caption{%
    We plot the fraction of continuous segments of length $l$,
    $\phi_{d,L,l}$, for $d=2$ and $L=30$, (left) and $d=3$ and $L=21$
    (right), for one thousand log-uniformly randomly sampled lengths.
  }%
  \figlab{enumerate}
\end{figure}

\subsection{Computational studies---simplicial Morton curve}

For simplices, we enumerate all possible SFC segments for a given uniform
refinement level and compute the number of their face-connected components.
We achieve this by performing a depth-first search on the connectivity
graph of the submesh generated by the segment%
\footnote{\url{https://github.com/holke/sfc_conncomp}}.

While we investigate the fraction of connected SFC segments of a given
particular length $l$ among all segments of length $l$ in \secref{numerical},
here we we compute the fraction of connected segments of any length among all
possible segments.
More precisely, we compute for each possible count of connected components
the chance that any randomly choosen SFC segment (with a random length) has
exactly this number of connected components.

For a uniform level 5 refined tetrahedron we obtain that 61\% of all
SFC segments are connected and only 7\% do have four or more connected
components.
For a uniform level 8 refined triangle, about 64\% of the segments are connected
with 2\% of the segments having four or more components.
For cubes and quadrilaterals, the respective ratios of connected segments are
60\% and 71\% (here we know that the disconnected segments have exactly two
components).
We collect these results in Figure~\ref{fig:tritetnumcomp} and
Table~\ref{tab:componentpercent}.

%
%
%
%
%
%
%
%
%
%
%

\begin{figure}
        \begin{center}
\includegraphics[width=0.48\textwidth]{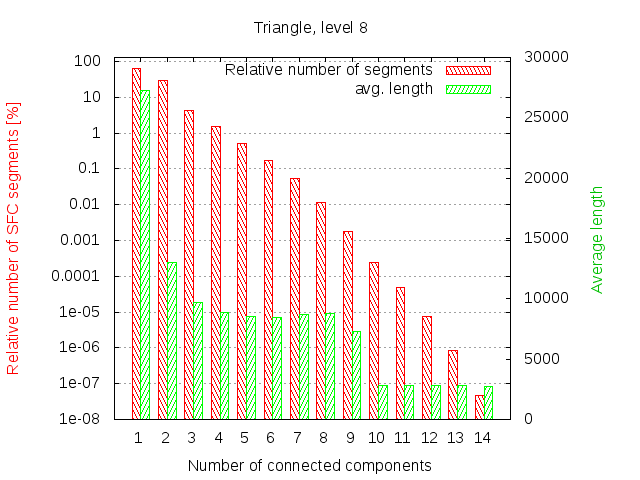}
\hfill
\includegraphics[width=0.48\textwidth]{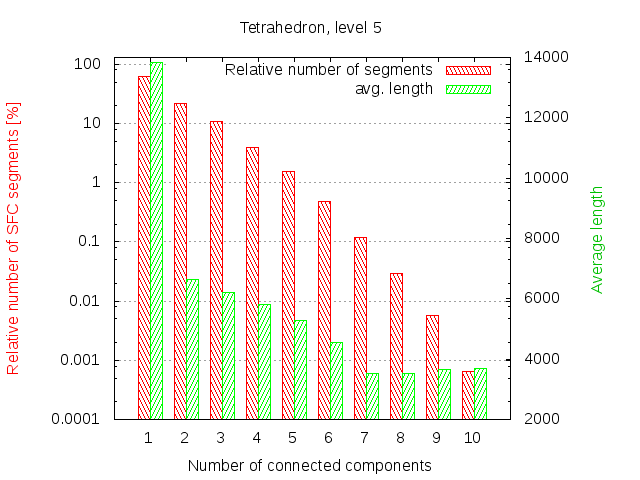}
        \end{center}
\caption{The relative count of SFC segments by number of connected components and
the average length (right y-axis) of these segments.
We exclude the segments of length 1.
Left: the distribution for a uniform level 8 refined triangle. We observe that
almost 98\% of all SFC segments have three connected components or less.
63.7\% are connected, 29.7\% have two connected components and 4.4\% have
three connected components.
Right: the distribution for a uniform level 5 refined tetrahedron.
Here, more than 93\% of the segments have three connected components or less with
61.0\% having exactly one connected component, 22.1\% with two connected 
components and 10.7\% with three connected components.
The
highest number of segments occuring agrees with
Proposition~\ref{prop:faceconncomp} (2D) and
Conjecture~\ref{con:3TMconjecture} (3D).}
\label{fig:tritetnumcomp}
\end{figure}

\begin{table}
        \begin{center}
\begin{tabular}{|l|r|r|r|r||r|r|}
\hline
&\multicolumn{4}{|c||}{Level 5} & \multicolumn{2}{c|}{Level 8}\\
\hline
& Quads & Cubes & Triangles &
\multicolumn{1}{c||}{Tets} & Quads & Triangles\\ \hline
Connected     &71.5\% & 60.0\% &63.8\% & 61.0\% &71.4\%&63.7\%\\
Non-connected &28.5\% & 40.0\% &36.2\% & 39.0\% &28.6\%&36.3\%\\
\hline
\end{tabular}
        \end{center}
\caption{The relative counts of connected and non-connected segments across all
possible SFC segments of a uniform level 5 and level 8 (2D only) refinement,
excluding segments of length one.
Here we average over a uniform distribution of lengths (in contrast to
the log-uniform distribution used in \figref{enumerate}).%
}
\label{tab:componentpercent}
\end{table}


\section{Conclusion}
\seclab{conclusion}

We prove in this document that the classical Morton or $z$-curve does not lead
to a fragmentation of the root cube into more than two disconnected subdomains.
Its loss of continuity in comparison to the Hilbert curve is thus controlled.
This is in line with experimental results
that establish the suitability of the Morton curve for
numerical applications.

We show that the bound for the recently proposed tetrahedral Morton cube is of
order $L$ and thus growing with the level of refinement.
Yet, we can demonstrate numerically that the fraction of connected to
non-connected segments is close to the cubical case.
In practice, we may expect both approaches to behave similarly.

Our result would appear relevant to make informed choices about the
type of space filling curve to use, for example in writing a new element-based
parallel code for the numerical solution of partial differential equations, or
any other code that benefits from a recursive subdivision of space.
Our theory and experiments support the existing numerical evidence that a
fragmentation of the parallel partition is not observed.


\section*{Acknowledgements}
\seclab{ack}

B.\  would like to thank Andreas Dedner for the invitation to the ICMS workshop on
Galerkin methods with applications in weather and climate forecasting, which 
provided motivation to get going proving this conjecture.
The authors would like to thank Michael Bader and Herman Haverkort for
suggesting additional relevant literature.
B.\ and H.\ acknowledge travel support by the Hausdorff Center for Mathematics
(HCM) at Bonn University
funded by the German Research Foundation (DFG).
I.\ gratefully acknowledges the support of the Intel Parallel Computing Center
at the University of Chicago.
H.\ gratefully acknowledges the scholarship support by the Bonn International
Graduate School for Mathematics (BIGS) as part of HCM.

\bibliographystyle{siam}
\bibliography{ccgo,group}


%

\end{document}